%% file: main.tex
\documentclass[acmtog]{acmart}

\AtBeginDocument{%
  }

\acmSubmissionID{2393}

\usepackage{listings}
\usepackage{xcolor}
\usepackage{multirow}
\usepackage{colortbl}
\usepackage{tabularx}
\usepackage{xspace}
\usepackage{enumitem}
\usepackage{subcaption}
\usepackage{tikz}
\usepackage{booktabs}
\usepackage{multirow}
\usepackage{siunitx}

\usetikzlibrary{arrows.meta, positioning, fit, calc}

\setcopyright{none}

\AtBeginDocument{%
  }

\makeatletter
\@ACM@balancefalse
\@ACM@pbalancefalse
\makeatother

\newcommand{\method}{MaterialClusterGS\xspace}

\setcopyright{none}
\copyrightyear{2026}
\acmYear{2026}
\acmDOI{10.1145/3592405}
\acmJournal{TOG}
\acmVolume{42}
\acmNumber{4}
\acmArticle{1}
\acmMonth{8}

\title{MaterialClusterGS: Palette-Based Material Decomposition and Physically-Based Relighting with 2D Gaussian Splatting}
\author{Hao Zhang}
\orcid{0009-0000-1145-783X}
\affiliation{%
  \institution{Peking University}
  \city{Beijing}
  \country{China}}
\email{zhanghao25@stu.pku.edu.cn}

\author{Ang Li}
\affiliation{%
  \institution{Peking University}
  \city{Beijing}
  \country{China}}
\email{psnalaalansp@gmail.com}

\author{Boyan Du}
\affiliation{%
  \institution{University of Science and Technology of China}
  \country{China}}
\email{boyer@mail.ustc.edu.cn}

\author{Junke Zhu}
\affiliation{%
  \institution{University of Science and Technology of China}
  \country{China}}
\email{junkezhu@mail.ustc.edu.cn}

\author{Fei Zhu}
\affiliation{%
  \institution{Peking University}
  \city{Beijing}
  \country{China}}
\email{feizhu@pku.edu.cn}

\author{Meng Gai}
\affiliation{%
  \institution{Peking University}
  \city{Beijing}
  \country{China}}
\email{gaimeng@pku.edu.cn}

\author{Zhangjin Huang}
\affiliation{%
  \institution{University of Science and Technology of China}
  \country{China}}
\email{zhuang@ustc.edu.cn}

\author{Guoping Wang}
\affiliation{%
  \institution{Peking University}
  \city{Beijing}
  \country{China}}
\email{wgp@pku.edu.cn}

\author{Sheng Li}
\affiliation{%
  \institution{Peking University}
  \city{Beijing}
  \country{China}}
\email{lisheng@pku.edu.cn}

\renewcommand{\shortauthors}{Zhang et al.}

\begin{document}
\sloppy
\makeatletter
\setlength{\@fptop}{0pt}
\setlength{\@fpbot}{0pt plus 1fil}
\makeatother

\input{chapters/abstract}

\begin{teaserfigure}
  \centering
  \includegraphics[width=\textwidth]{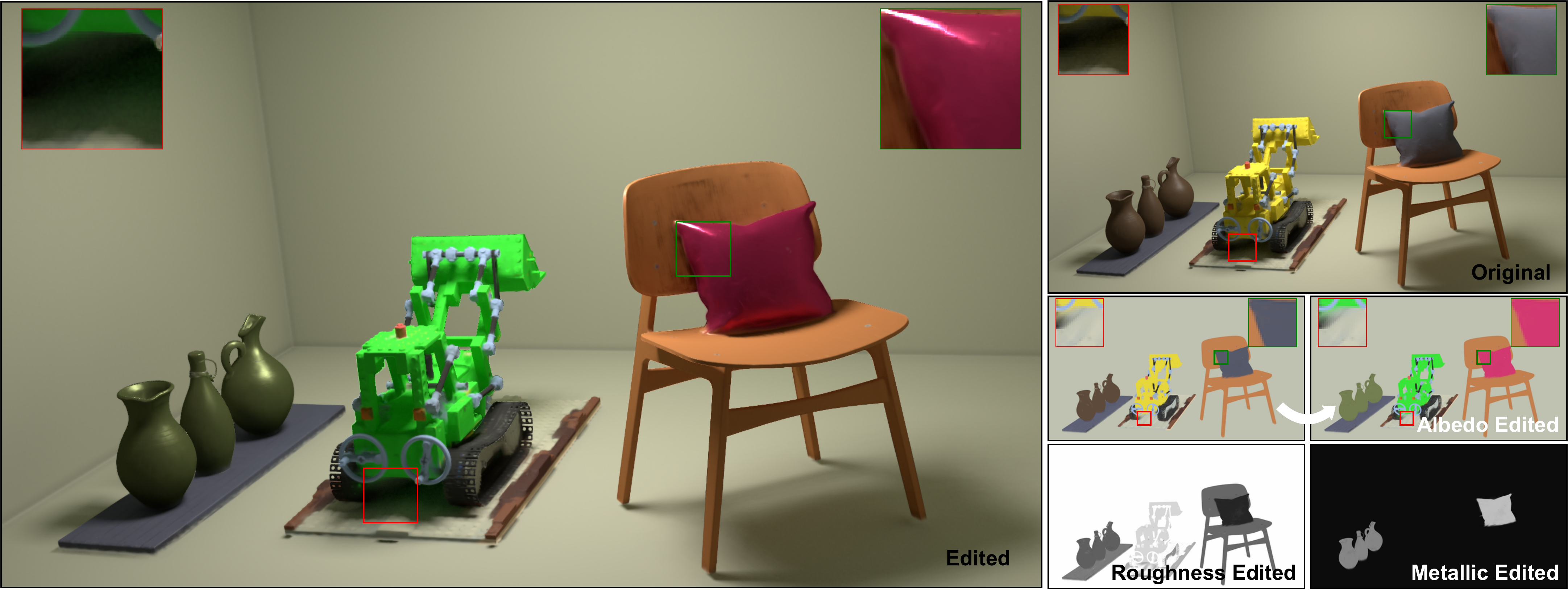}
    \caption{Scene editing with \method. Left: the edited scene rendered with modified palette materials. Right top: the original render before editing. Right bottom: albedo maps before (left) and after (right) material editing, along with the edited roughness and metallic maps. Shared BRDF prototypes ensure consistent materials, enabling edits to propagate coherently. The green inset highlights specular reflections from the low-roughness, high-metallicity edit. The red inset shows path-traced indirect illumination induced by the material edit, rather than albedo-baked appearance changes.}
  \label{fig:teaser}
\end{teaserfigure}

\begin{CCSXML}
<ccs2012>
 <concept>
  <concept_id>10010147.10010371.10010387</concept_id>
  <concept_desc>Computing methodologies~Rendering</concept_desc>
  <concept_significance>500</concept_significance>
 </concept>
</ccs2012>
\end{CCSXML}

\ccsdesc[500]{Computing methodologies~Rendering}

\keywords{3D Gaussian Splatting, inverse rendering, material decomposition, palette representation, relighting}

\maketitle

\input{chapters/intro}
\input{chapters/relatedworks}
\input{chapters/motivation}
\input{chapters/methods}
\input{chapters/results}
\input{chapters/discussion}

\input{chapters/conclusion}

\clearpage
\bibliographystyle{ACM-Reference-Format}
\bibliography{chapters/reference}

\input{chapters/images}

\input{chapters/supplement}

\end{document}

%% file: chapters/abstract.tex
\begin{abstract}
We present \method, a palette-based material decomposition framework for 2D Gaussian Splatting that enables physically based relighting and material editing. Existing Gaussian inverse rendering methods typically assign
independent BRDF parameters to individual primitives. While flexible, this local fitting strategy makes material recovery highly under-constrained: shadows, indirect illumination, geometric errors, and visibility residuals can be absorbed into thousands of slightly different local material estimates. Meanwhile, recent palette-based appearance methods operate solely in RGB space without modeling physical materials or illumination. To bridge this gap, we represent scene materials using a compact global palette of shared BRDF prototypes assigned via a continuous spatial material field. Without shared material structure, editing one region does not propagate consistently to others of the same material, making per-primitive decompositions impractical for editing. We jointly optimize the material field, palette prototypes, and environment lighting under a physically based rendering objective. The resulting framework recovers compact, spatially coherent attributes directly usable for material editing, relighting, and transfer.
\end{abstract}

%% file: chapters/intro.tex
\section{INTRODUCTION}

3D Gaussian Splatting~\cite{kerbl2023gaussiansplatting} (3DGS) has become a dominant scene representation for novel view synthesis, combining explicit primitives with efficient differentiable rendering and high image quality. However, appearance is stored as radiance rather than physically meaningful material properties, making inverse rendering and editing difficult.

The main challenge for inverse rendering is not merely that Gaussian colors are non-physical, but that per-Gaussian material degrees of freedom allow each primitive to independently adjust its BRDF parameters. This entanglement is not removed but redistributed: shadows, ambient occlusion, interreflection, and geometric inaccuracies may be baked into local albedo or roughness estimates. While sufficient for reproducing the original scene, the learned appearance no longer behaves like a physically meaningful BRDF---editing materials or relighting produces implausible results because the estimated parameters act as a view-conditioned radiance cache rather than intrinsic material properties.

This issue has motivated two lines of recent work. Inverse rendering methods for Gaussian scenes~\cite{gsir2023,gaussianshader2024,R3DG,chen2025gigs} estimate materials per primitive, yielding spatially inconsistent decompositions that cannot fully decouple residual lighting from albedo. In parallel, palette-based appearance methods~\cite{ren2024palettegaussian,bi2025recolorgaussian,chao2026colorgradedgs,kuang2023palettenerf} cluster scene appearance into a compact set of shared bases, but work purely at the RGB level without decomposing reflectance, geometry, or lighting. Without a shared material structure, neither approach supports reliable downstream editing---modifying one material region does not consistently propagate to all other regions sharing the same material.

In this paper, we propose \method, a palette-based material decomposition framework for 2D Gaussian Splatting~\cite{Huang2DGS2024} that unifies these two directions. We build on 2D Gaussian surfels rather than 3D Gaussians because our physically based rendering requires well-defined ray--surface intersections for visibility and indirect illumination; as noted by IRGS~\cite{IRGS}, 2D Gaussian disks provide unambiguous ray--splat intersections whereas 3D Gaussians do not. Instead of letting every Gaussian invent its own BRDF, we require many Gaussians to share a small set of BRDF prototypes in a global material palette, with each Gaussian receiving a soft assignment predicted from its 3D position. This enforces global material consistency by construction while permitting smooth transitions at boundaries.

To make the recovered materials physically useful, we couple the palette representation with a physically based rendering objective that jointly optimizes materials and illumination. A residual weight estimation stage identifies where the PBR decomposition falls short; a subsequent refinement stage injects per-Gaussian detail guided by the learned weights. The result is an unsupervised decomposition that is compact, editable, and visually faithful.

The main contributions of our work are as follows:
\begin{itemize}[leftmargin=1.5em]
\item We identify excessive material freedom as the key obstacle to editable inverse rendering, and address it with a palette-based material representation that replaces unconstrained per-Gaussian BRDFs with a shared set of BRDF prototypes.
\item We introduce a compact BRDF palette and a continuous spatial material
field that maps 3D positions to shared material assignments, turning material
recovery from local appearance fitting into global material explanation without
pairwise regularization.
% \item We design a multi-stage optimization that couples palette-based materials with a PBR objective, residual weight estimation, and per-Gaussian refinement, enabling material editing, relighting, and transfer.
\item We present a palette-level material editing framework built on the learned
BRDF palette, residual weight estimation, and sparse per-Gaussian refinement,
enabling coherent material editing, material transfer, and forward
path-traced relighting after edits.
\end{itemize}

%% file: chapters/relatedworks.tex
\section{RELATED WORK}

\subsection{Gaussian Splatting and Inverse Rendering}

\paragraph{Gaussian splatting.}
3D Gaussian Splatting (3DGS) represents scenes as collections of anisotropic Gaussians with learnable geometry, opacity, and view-dependent color, enabling high-quality novel view synthesis with real-time performance~\cite{kerbl2023gaussiansplatting}. A key limitation is that appearance is encoded as radiance rather than physically meaningful material parameters, entangling illumination, reflectance, and geometry. 2D Gaussian Splatting~\cite{Huang2DGS2024} replaces volumetric 3D Gaussians with oriented surfels, yielding more accurate surface reconstruction and better compatibility with material-aware rendering. Recent work has further advanced Gaussian-based ray tracing: 3D Gaussian Ray Tracing~\cite{3dgrt2024} introduces efficient tracing of particle scenes for shadows and reflections, and Stochastic Ray Tracing of 3D Transparent Gaussians~\cite{sun2025stochasticraytracingtransparent} accelerates 3DGRT via stochastic ray--Gaussian intersection.

\paragraph{Neural inverse rendering.}
Early neural approaches demonstrated that volumetric representations can be decomposed into shape, reflectance, and lighting: NeRFactor~\cite{zhang2021nerfactor} factors a NeRF into material and illumination, TensoIR~\cite{Jin2023TensoIR} introduces a tensorial decomposition for efficient inverse rendering, and subsequent work models indirect illumination for more accurate decomposition~\cite{zhang2022invrender}. However, MLP-based rendering remains computationally expensive.

\paragraph{Gaussian inverse rendering.}
Gaussian-based methods bring inverse rendering into the splatting setting with real-time potential. Early work such as GS-IR~\cite{gsir2023} and GaussianShader~\cite{gaussianshader2024} estimate per-primitive materials via deferred rendering and shader-style appearance, respectively. A growing body of recent methods tackles relighting and illumination decomposition from complementary angles: GI-GS~\cite{chen2025gigs}, R3DG~\cite{R3DG}, SVG-IR~\cite{sun25svgir}, RNG~\cite{fan25rng}, and IRGS~\cite{IRGS} combine BRDF decomposition with various tracing or neural strategies; DeferredGS~\cite{wu2024deferredgs}, PRTGS~\cite{guo2024prtgs}, RadiosityGS~\cite{jiang2025radiositygs}, and RadioGS~\cite{han2026radiogs} introduce deferred shading, precomputed radiance transfer, radiosity-based light transport, and radiometric consistency for efficient relighting; and GS-ID~\cite{GS-ID}, GlossyGS~\cite{glossygs}, and EnvGS~\cite{xie2024envgs} leverage diffusion priors and environment Gaussians for illumination decomposition and glossy material recovery. Most of these methods rely on per-primitive material prediction, which can produce spatially noisy decompositions; our palette-based formulation addresses this by sharing prototypes through a continuous spatial field.

\subsection{Gaussian Scene Editing and Palette-Based Appearance}

\paragraph{Gaussian scene editing.}
Editable Gaussian methods have developed along several paradigms. Diffusion-based editors render views from the current scene, edit them with a 2D generative prior, and optimize the 3D representation to match~\cite{gaussctrl,vcedit,editsplat}, but must carefully address multi-view consistency. Direct 3D manipulation methods~\cite{gaussianeditor,sceneditor3dgs} modify primitive parameters directly, offering explicit geometric control but requiring careful handling of artifacts. Semantic methods~\cite{semanticgaussians} ground edits through language-conditioned or instance-level selection. Closer to our setting, Editable Reflection Gaussian~\cite{edtable-3dgs} separates diffuse and specular contributions into distinct G-buffers via path-traced Gaussian radiance fields, enabling physically based editing of reflections in reconstructed scenes. However, it focuses on reflection editing rather than full material decomposition and relighting.

\paragraph{Palette-based appearance.}
A useful prior for real scenes is that the number of distinct materials is far smaller than the number of geometric primitives. Palette-based representations exploit this by using a compact set of shared prototypes and a spatial assignment field, improving global consistency, reducing the dimensionality of the inverse problem, and enabling intuitive editing.
PaletteNeRF~\cite{kuang2023palettenerf} first introduced this idea for neural radiance fields. Recent extensions to Gaussian Splatting include PaletteGaussian~\cite{ren2024palettegaussian} for palette-based color editing, RecolorGaussian~\cite{bi2025recolorgaussian} for scene recoloring, and ColorGradedGS~\cite{chao2026colorgradedgs} for palette-driven color grading. However, these methods operate purely in RGB space without modeling physical materials or illumination, limiting them to appearance edits rather than relighting or material transfer, limitations addressed by our palette-of-BRDF formulation.

%% file: chapters/motivation.tex
\section{Preliminary}

\subsection{2D Gaussian Splatting}

3DGS represents a scene as a set of anisotropic 3D Gaussians with learnable geometry, opacity, and view-dependent color encoded as spherical harmonics. 2D Gaussian Splatting~\cite{Huang2DGS2024} replaces these volumetric ellipsoids with oriented 2D surfels defined by a center $\boldsymbol{\mu}$, two tangent vectors $\mathbf{t}_u,\mathbf{t}_v$, a scaling vector $\mathbf{s}$, and an opacity $\alpha$. The local surface normal is $\mathbf{n}=(\mathbf{t}_u\times\mathbf{t}_v)/\|\mathbf{t}_u\times\mathbf{t}_v\|$. During rendering, each surfel is projected to screen space for efficient tile-based rasterization, and the final pixel color is obtained through front-to-back alpha compositing:
\begin{equation}
C(\mathbf{x})=\sum_{i=1}^{N}T_i\,\alpha_i\,c_i,\quad
T_i=\prod_{j=1}^{i-1}(1-\alpha_j),
\end{equation}
where $T_i$ is the accumulated transmittance, $\alpha_i$ is the opacity weighted by the projected 2D density, and $c_i$ is the SH-evaluated color. Because appearance is stored directly as radiance, material properties and illumination are entangled---making inverse rendering and relighting difficult without an explicit material model.

\subsection{Physically Based Gaussian Rendering}

A simplified Disney principled BRDF~\cite{burley12disneybrdf} with the GGX microfacet distribution~\cite{walter2007ggx} and the Schlick Fresnel approximation~\cite{schlick1994brdf} are employed. The outgoing radiance at a surface point is governed by the rendering equation,
\begin{equation}
L_o(\mathbf{p},\boldsymbol{\omega}_o)=\int_{\Omega^+} f(\mathbf{p},\boldsymbol{\omega}_i,\boldsymbol{\omega}_o)\,L_i(\mathbf{p},\boldsymbol{\omega}_i)\,(\mathbf{n}\cdot\boldsymbol{\omega}_i)\,d\boldsymbol{\omega}_i,
\end{equation}
where $f$ is the BRDF, $L_i$ is incoming radiance, and $\mathbf{n}$ is the surface normal. We use a three-parameter material model consisting of albedo $\mathbf{a}$, roughness $r$, and metallic $m$. The BRDF is decomposed into a diffuse (Lambertian) term and a specular (GGX microfacet) term, where the metallic parameter blends between dielectric and metallic Fresnel response.

\section{MOTIVATION}

\begin{figure}[!t]
  \centering
  \includegraphics[width=\columnwidth]{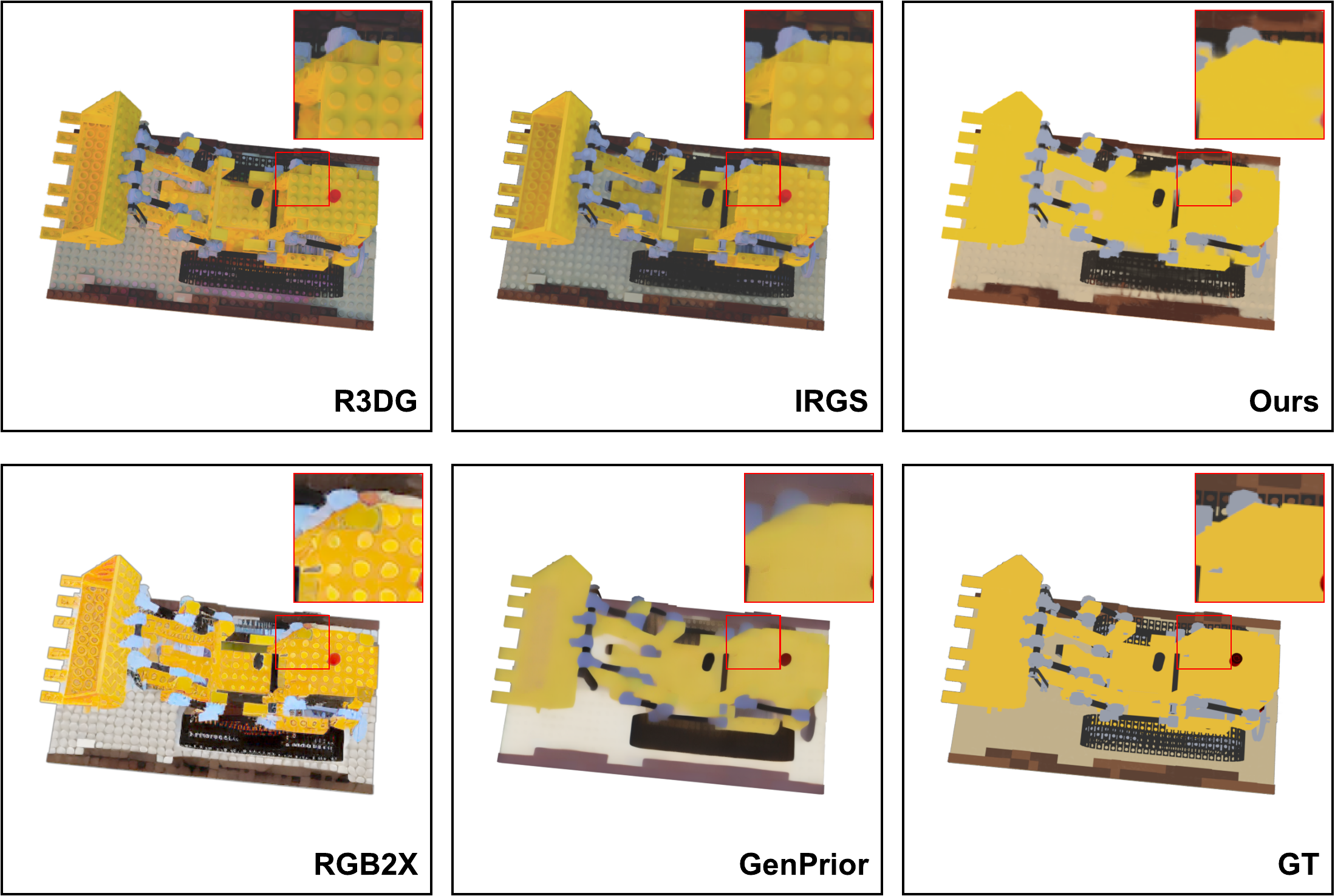}
  \caption{Albedo decomposition comparison. Row~1: inverse rendering methods R3DG and IRGS bake geometric shading artifacts into the albedo, while ours recovers clean results. Row~2: RGB${\leftrightarrow}$X~\cite{rgb2x} exhibits similar baking artifacts, GenPrior from Editable-3DGS~\cite{edtable-3dgs} over-smooths and loses material detail, compared to ground truth.}
  \label{fig:motivation}
\end{figure}

Recovering physically meaningful materials from a pre-trained Gaussian scene is difficult not only because inverse rendering is ill-posed (the same observed radiance can be explained by many combinations of material, geometry, and illumination), but also because the common per-Gaussian material parameterization gives the optimizer too much freedom. Existing inverse rendering methods~\cite{R3DG,IRGS} often let each Gaussian fit its own material parameters. This creates a convenient escape route during optimization: whenever geometry, visibility, or illumination is imperfect, the error can be hidden inside local albedo, roughness, or metallic values. The resulting decomposition may reproduce images well, but it no longer provides a reliable material description for editing or relighting. 
This fragmentation prevents unified editing: modifying what should be a single material requires adjusting thousands of uncoordinated per-primitive parameters. Moreover, imperfect reconstructed geometry introduces further decomposition errors: contact shadows, ambient occlusion, and interreflection that arise from inaccurate surface coverage or misaligned normals are absorbed into the estimated albedo rather than correctly attributed to geometry (see $1^{st}$ row in~\autoref{fig:motivation}, by R3DG and IRGS). To constrain this ambiguity, recent methods incorporate external priors such as diffusion models~\cite{edtable-3dgs,rgb2x} or learned material statistics. However, these priors still introduce artifacts: RGB${\leftrightarrow}$X~\cite{rgb2x} suffers from similar baking artifacts, while Editable-3DGS~\cite{edtable-3dgs} tends to over-smooth decomposition and loses fine material detail (see $2^{nd}$ row in~\autoref{fig:motivation}, by RGB2X and GenPrior).

Instead of relying on external generative priors, we exploit a simple observation inherent to real-world scenes: the number of distinct materials is usually far smaller than the number of geometric primitives. We therefore represent scene appearance using a compact palette of shared BRDF prototypes assigned through a continuous spatial field, encouraging coherent multi-view material explanations while remaining lightweight and scene-adaptive.

%% file: chapters/methods.tex
\section{METHODS}

\begin{figure*}[t]
\centering
\includegraphics[width=0.9\textwidth]{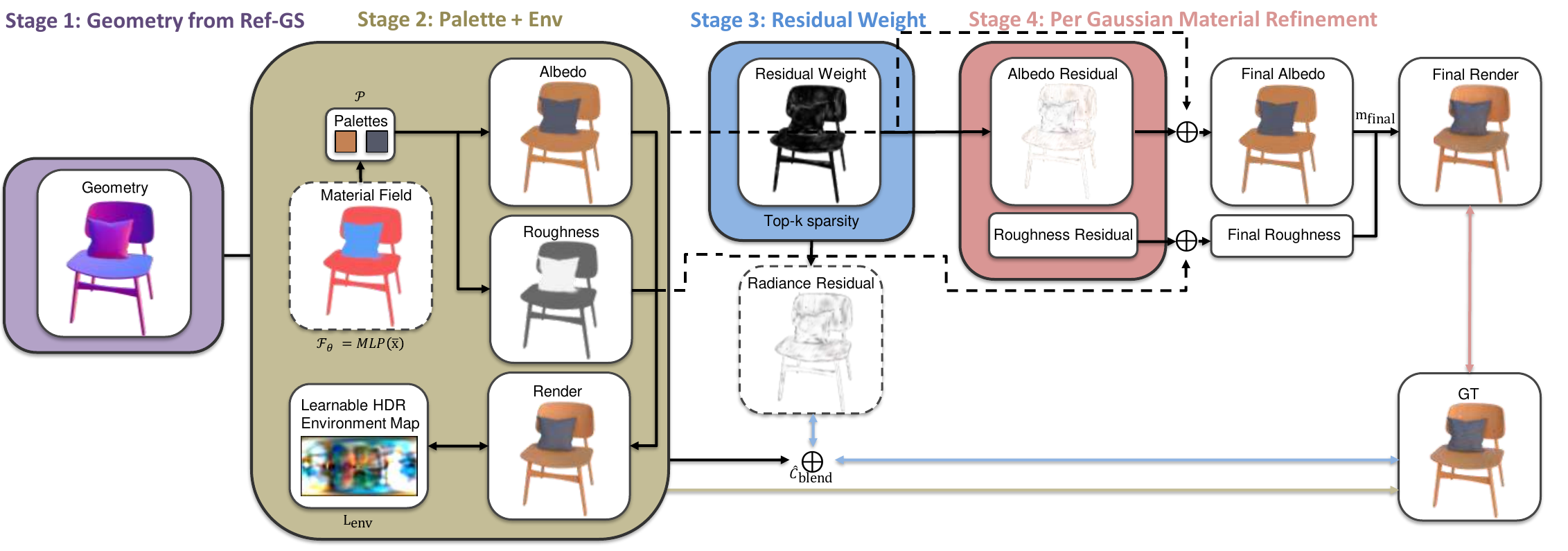}
\caption{Overview of \method. Starting from a pre-trained 2DGS scene (Stage~1), Stage~2 introduces a learnable material palette $\mathcal{P}$ and a spatial material field $\mathcal{F}_\theta$ that assigns each Gaussian to palette entries via softmax with temperature annealing, along with an HDR environment map $L_{\mathrm{env}}$ for physically based rendering. Stage~3 jointly trains per-Gaussian residual weights $w_i$ and the SH radiance field to identify regions where the PBR decomposition is insufficient, using an intermediate blended rendering $\hat{C}_{\mathrm{blend}}$. Stage~4 freezes the learned weights and refines per-Gaussian direct materials via weight-guided blending: $\mathbf{m}_i^{\mathrm{final}}=(1{-}w_i)\mathbf{m}_i^{\mathrm{palette}}+w_i\sigma(\hat{\mathbf{d}}_i)$.}
\label{fig:pipeline}
\end{figure*}

\subsection{Overview}

We assume a pre-trained 2D Gaussian Splatting scene provided by Ref-GS~\cite{zhang2024ref} as the geometric and radiance initialization. Starting from this scene, materials are represented as shared scene-level attributes rather than independent BRDFs attached to individual Gaussians. The scene appearance is explained through a compact palette of shared BRDF prototypes jointly assigned to groups of Gaussians. Our method proceeds through four stages.

The core of the pipeline is a \emph{material palette} $\mathcal{P}=\{\mathbf{m}_k\}_{k=1}^{K}$, a compact set of shared BRDF prototypes. A spatial material field $\mathcal{F}_{\theta}$ maps each Gaussian's 3D position to a soft palette assignment, ensuring that nearby Gaussians receive similar materials without explicit pairwise regularization. Per-Gaussian material residuals $\Delta\mathbf{m}_i$ capture fine local variation beyond what the palette alone can express. A learnable HDR environment map $L_{\mathrm{env}}$ provides the illumination, a residual weight estimation stage identifies Gaussians where the PBR decomposition is insufficient, and a final per-Gaussian refinement stage injects local detail guided by the learned weights.

\subsection{Palette-Based Material Decomposition}

\paragraph{Material parameterization.}
Each palette entry stores albedo, roughness, and metallic parameters in raw form,
\begin{equation}
\mathbf{m}_k=[\hat{\mathbf{a}}_k,\hat{r}_k,\hat{m}_k] \in \mathbb{R}^{5}.
\end{equation}
We activate these with bounded nonlinearities to impose physically meaningful ranges:
\begin{align}
\mathbf{a}_k &= \sigma(\hat{\mathbf{a}}_k) \cdot a_s + a_b, &
r_k &= \sigma(\hat{r}_k), &
m_k &= \sigma(\hat{m}_k),
\end{align}
where $\sigma$ is the sigmoid function, $a_s$ is the albedo scale, and $a_b$ is the albedo bias. The resulting albedo range $[a_b,\, a_s{+}a_b]$ prevents degenerate solutions where albedo approaches zero to absorb missing lighting, while accommodating common bright materials; the metallic parameter is fixed during training to avoid unstable convergence and degenerate metallic--albedo coupling.

\paragraph{Spatial material field.}
For a Gaussian at position $\mathbf{x}_i$, the material is computed as a soft combination of palette entries:
\begin{equation}
\mathbf{m}_i = \sum_{k=1}^{K} w_{ik}\,\mathbf{m}_k, \quad w_{ik} = [\mathbf{w}_i]_k,
\end{equation}
where $\mathbf{w}_i \in \mathbb{R}^K$ is the assignment weight vector produced by a lightweight MLP with sinusoidal positional encoding over normalized 3D coordinates,
\begin{equation}
\mathbf{w}_i = \operatorname{softmax}\left(\frac{\mathcal{F}_{\theta}(\mathbf{x}_i)}{\tau}\right),
\end{equation}
where $\mathcal{F}_{\theta}(\mathbf{x})=\operatorname{MLP}(\gamma(\bar{\mathbf{x}}))$, $\bar{\mathbf{x}}$ is the position normalized within the scene bounding box, and $\gamma(\cdot)$ is a positional encoding with $L$ frequency bands yielding an input dimension of $3+6L$. The temperature $\tau$ is annealed during training: $\tau^{(t+1)} = \max(\tau_{\min},\, \tau^{(t)} - \delta_\tau)$, starting from soft multi-material blending and gradually converging toward confident assignments. Since the MLP is continuous in 3D position, nearby Gaussians naturally receive similar weights, inducing spatial coherence without pairwise regularization.

\paragraph{Material residuals.}
To capture fine local variation beyond the palette, we introduce per-Gaussian material residuals $\Delta\mathbf{m}_i = [\Delta\hat{\mathbf{a}}_i, \Delta\hat{r}_i, \Delta\hat{m}_i]$ added after activation with bounded magnitude:
\begin{equation}
\mathbf{m}_i^{\mathrm{f}}=\mathrm{clamp}\bigl(\mathbf{m}_i+\boldsymbol{\sigma}(\Delta\hat{\mathbf{m}}_i)\odot\boldsymbol{\delta},\,\mathbf{m}_{\min},\,\mathbf{m}_{\max}\bigr),
\end{equation}
where $\boldsymbol{\delta}=[\delta_a,\delta_r,\delta_m]$ controls per-channel residual magnitude and clamping enforces physical ranges. The residuals are $\ell_2$-regularized to remain small, preserving the palette as the primary representation.

\paragraph{Initialization and merging.}
The palette is initialized by MiniBatch K-Means on SH DC coefficients, followed by MLP pre-training. During training, similar entries are periodically merged: pairwise distances are computed on activated palette parameters with albedo weighted higher than scalar BRDF attributes ($\mathbf{W}_{\mathrm{feat}}{=}[1,1,1,0.5,0.5]$); entries whose distance falls below a threshold ($\epsilon_{\mathrm{merge}}{=}0.08$) are grouped via Union-Find; each cluster is replaced by a weighted average in raw parameter space using Gaussian assignment weights; and the MLP is re-initialized with pre-training on the new assignments, allowing the method to discover the natural material count of the scene.

\subsection{Physically Based Rendering}

\paragraph{Material rasterization.}
Per-Gaussian materials are rasterized into image-space attributes via the same transmittance-based compositing used for Gaussian rendering:
\begin{equation}
\bar{\mathbf{m}}(\mathbf{x}) = \sum_{i=1}^{N}T_i\alpha_i\,\mathbf{m}_i^{\mathrm{f}},
\end{equation}
where $T_i=\prod_{j=1}^{i-1}(1-\alpha_j)$. We recover pixel-space attributes through alpha normalization to avoid $\alpha^2$ darkening at edges:
\begin{equation}
\mathbf{m}_{\mathrm{px}}=\frac{\bar{\mathbf{m}}}{\bar{\alpha}+\epsilon}.
\end{equation}

\paragraph{Surface normal estimation.}
Surface normals are derived from the rasterized depth map via finite-difference cross products:
\begin{equation}
\mathbf{n}(\mathbf{x}) = \frac{\partial \mathbf{P}}{\partial x} \times \frac{\partial \mathbf{P}}{\partial y},
\end{equation}
where $\mathbf{P}(\mathbf{x}) = \mathbf{c}_{\mathrm{cam}} + d(\mathbf{x}) \cdot \mathbf{r}(\mathbf{x})$ is the surface point recovered from depth. We prefer depth-derived normals over alpha-blended Gaussian normals, since alpha-blending commutes with diffuse shading but not with the nonlinear specular BRDF~\cite{glossygs}.

\paragraph{Direct illumination.}
We evaluate the rendering integral via Monte Carlo sampling with $S$ directions distributed on the upper hemisphere using Fibonacci sphere sampling~\cite{burley12disneybrdf}. The outgoing radiance is decomposed into diffuse and specular components. The diffuse term is:
\begin{equation}
L_{\mathrm{diffuse}} = \frac{\mathbf{a}_{\mathrm{px}}}{\pi}\int_{\Omega^+} L_i(\boldsymbol{\omega}_i)\,(\mathbf{n}\cdot\boldsymbol{\omega}_i)\,d\boldsymbol{\omega}_i.
\end{equation}
The specular term uses the GGX microfacet BRDF~\cite{walter2007ggx} with Schlick's Fresnel approximation~\cite{schlick1994brdf}:
\begin{equation}
f_s(\mathbf{p},\boldsymbol{\omega}_i,\boldsymbol{\omega}_o) = \frac{F(\boldsymbol{\omega}_o,\mathbf{h})\cdot D(\mathbf{h})\cdot G(\boldsymbol{\omega}_i,\boldsymbol{\omega}_o)}{4\,(\mathbf{n}\cdot\boldsymbol{\omega}_i)\,(\mathbf{n}\cdot\boldsymbol{\omega}_o)},
\end{equation}
where $\mathbf{h}$ is the half-vector, $\alpha = r_{\mathrm{px}}^2$, and the base reflectivity interpolates between dielectric ($F_0 = 0.04$) and metallic ($F_0 = \mathbf{a}_{\mathrm{px}}$) behavior via $F_0 = (1 - m_{\mathrm{px}}) \cdot 0.04 + m_{\mathrm{px}} \cdot \mathbf{a}_{\mathrm{px}}$.

\paragraph{Indirect illumination.}
For forward relighting, we employ a wavefront path tracer that proceeds from the second bounce with cosine-weighted hemisphere sampling, next-event estimation, and Russian roulette termination. During inverse rendering, we approximate indirect illumination to avoid the cost of differentiating through multi-bounce path tracing (e.g., via path replay backpropagation~\cite{PathReplayBackpropagation}). The incident radiance at each shading point is
\begin{equation}
L_i(\mathbf{p},\boldsymbol{\omega}_i) = V(\mathbf{p},\boldsymbol{\omega}_i)\,L_{\mathrm{env}}(\boldsymbol{\omega}_i) + L_{\mathrm{indirect}}(\mathbf{p},\boldsymbol{\omega}_i),
\end{equation}
where $V$ is the binary visibility from ray tracing through a Gaussian surfel BVH, $L_{\mathrm{env}}$ is the environment map, and the indirect term is read from the pre-trained SH coefficients of the intersected Gaussian:
\begin{equation}
L_{\mathrm{indirect}}(\mathbf{p},\boldsymbol{\omega}_i) = (1 - V(\mathbf{p},\boldsymbol{\omega}_i)) \cdot C_{\mathrm{trace}}(\mathbf{p},\boldsymbol{\omega}_i).
\end{equation}
Each Gaussian surfel is converted to a bounding icosahedron mesh based on its covariance and opacity threshold for BVH construction. Following IRGS~\cite{IRGS}, the pre-trained SH radiance field serves as a compact indirect illumination cache during optimization. Once materials are decomposed and edited, the SH cache is replaced by forward path tracing for relighting.

\paragraph{Environment map.}
Illumination is represented as a learnable lat-long HDR environment map $L_{\mathrm{env}} \in \mathbb{R}^{H \times 2H \times 3}$ with exponential activation. A mipmap pyramid provides pre-filtered specular convolution indexed by roughness, and importance sampling combines cosine-weighted hemisphere and environment map sampling.

\subsection{Residual Weight Estimation}

The PBR model cannot fully explain all appearance effects in the pre-trained scene, particularly where reconstructed geometry is inaccurate. To identify which Gaussians need additional correction, we train a per-Gaussian residual weight simultaneously with an SH residual field that absorbs appearance the PBR model cannot capture. The rendered image blends PBR and SH outputs:
\begin{equation}
\hat{C}_{\mathrm{blend}}(\mathbf{x})=(1-\bar{w}(\mathbf{x}))\hat{C}_{\mathrm{PBR}}(\mathbf{x})+\bar{w}(\mathbf{x})\hat{C}_{\mathrm{SH}}(\mathbf{x}),
\end{equation}
where $\bar{w}(\mathbf{x})$ is the rasterized residual weight and each Gaussian has $w_i = \sigma(\hat{w}_i) \cdot w_{\max}$. The primary output of this stage is the learned weights $w_i$ themselves, not the blended image; the SH residual field is discarded after this stage.

Not all Gaussians need correction: a top-$k$ sparsity mask with straight-through gradients keeps only the top $\rho$ fraction non-zero in the forward pass while allowing gradients to flow through all Gaussians, with $\rho$ ramped from 0 during training. At inference time, $\rho$ can be manually reduced to trade reconstruction fidelity for stronger palette editability.

The residual weights are further regularized to match the pre-computed error distribution:
\begin{equation}
\mathcal{L}_{\mathrm{rw}} = \left\| \bar{w}(\mathbf{x}) - \mathrm{clamp}(s \cdot (e(\mathbf{x}) - \epsilon_e),\, 0,\, w_{\max}) \right\|^2,
\end{equation}
where $e(\mathbf{x})$ is the per-pixel reconstruction error from the pure PBR render, encouraging large weights only where decomposition error is significant.

\subsection{Per-Gaussian Material Refinement}

The palette-based representation guarantees global albedo uniformity but may lack expressiveness for fine local material variation. After the residual weight has been learned, we freeze it and introduce per-Gaussian direct materials $\mathbf{d}_i = [\hat{\mathbf{a}}_i^{\mathrm{direct}}, \hat{r}_i^{\mathrm{direct}}, \hat{m}_i^{\mathrm{direct}}]$. The final material is a weight-guided blend:
\begin{equation}
\mathbf{m}_i^{\mathrm{final}} = (1 - w_i) \cdot \mathbf{m}_i^{\mathrm{palette}} + w_i \cdot \sigma(\hat{\mathbf{d}}_i),
\end{equation}
where $\sigma(\hat{\mathbf{d}}_i)$ denotes element-wise activation.
Since $w_i$ is large only where the palette decomposition is inaccurate, per-Gaussian detail is added exactly where needed. Direct materials are initialized from the current palette output; all other parameters are frozen in this stage.

\subsection{Training Objectives}

Our loss function balances physical plausibility, compactness, and fidelity. The main reconstruction objective is
\begin{equation}
\mathcal{L}_{\mathrm{recon}}=(1-\lambda_{\mathrm{DSSIM}})\,\mathcal{L}_1+\lambda_{\mathrm{DSSIM}}\bigl(1-\operatorname{SSIM}\bigr).
\end{equation}

\paragraph{Normal consistency.}
We enforce consistency between rendered normals and depth-derived pseudo-normals:
\begin{equation}
\mathcal{L}_{\mathrm{normal}} = \frac{1}{|\mathcal{M}|}\sum_{\mathbf{x}\in\mathcal{M}}\bigl(1 - \mathbf{n}_{\mathrm{render}}(\mathbf{x})\cdot\mathbf{n}_{\mathrm{depth}}(\mathbf{x})\bigr).
\end{equation}

\paragraph{Truncated albedo consistency.}
An edge-aware loss that encourages smooth albedo within material regions while preserving boundaries:
\begin{equation}
\mathcal{L}_{\mathrm{albedo}}=\frac{1}{C}\sum_{c=1}^{C}\frac{1}{HW}\sum_{\mathbf{x}}\min\left(\left|a_c(\mathbf{x})-\bar{a}_c^{\mathrm{nb}}(\mathbf{x})\right|,\epsilon_a\right)g(\mathbf{x}),
\end{equation}
where $\bar{a}_c^{\mathrm{nb}}$ is the $3\times3$ neighborhood mean, $\epsilon_a$ is the truncation threshold, and $g(\mathbf{x})=\exp(-5\|\nabla C_{\mathrm{GT}}(\mathbf{x})\|)$ suppresses smoothing at GT image edges. This loss is used only in Stage~2 with its weight annealed from $\lambda{=}0.5$ to $0$, and disabled in Stage~4 since per-Gaussian materials are free to deviate from the palette.

\paragraph{Palette regularization.}
To prevent palette collapse, we apply an entropy loss encouraging diverse usage:
\begin{equation}
\mathcal{L}_{\mathrm{entropy}} = \sum_{k=1}^{K}\bar{w}_k\log\bar{w}_k, \quad \bar{w}_k = \frac{1}{N}\sum_{i=1}^{N}w_{ik},
\end{equation}
and a dead palette penalty that softly pushes the optimizer to use or merge unused entries:
\begin{equation}
\mathcal{L}_{\mathrm{dead}} = \sum_{k=1}^{K}\max\bigl(0,\, \epsilon_{\mathrm{dead}} - \bar{w}_k\bigr)^2.
\end{equation}

\paragraph{Additional terms.}
A lighting smoothness loss $\mathcal{L}_{\mathrm{light}}=\|L_{\mathrm{direct}}-\bar{L}_{\mathrm{direct}}\|_1$ encourages spatially smooth direct lighting and is used only in Stage~2. The material residuals are regularized with an $\ell_2$ penalty to remain small.

%% file: chapters/results.tex
\section{RESULTS}

We evaluate the proposed method on decomposition quality, rendering fidelity, and editing capability.

\subsection{Experimental Setup}

The optimization proceeds in four stages starting from a pre-trained Ref-GS~\cite{zhang2024ref} scene (Stage~1). Stage~2 jointly optimizes the material palette, spatial field, and environment map for 3000 iterations (${\sim}20$ min). Stage~3 jointly trains per-Gaussian residual weights and the SH radiance field for 1000 iterations (${\sim}2.5$ min) to identify decomposition errors. Stage~4 freezes the learned weights and refines per-Gaussian direct materials for 3000 iterations (${\sim}17$ min). The total optimization time (excluding Stage~1) is approximately 40 minutes. All experiments use a sparsity ratio of $\rho{=}16\%$, which provides sufficient capacity for most objects while keeping the refinement stage compact.

\subsection{Comparisons}

We evaluate on two synthetic benchmarks with ground-truth materials and relighting images: TensoIR~\cite{Jin2023TensoIR} (Table~\ref{tab:comparison_tensoir}) and Synthetic4Relight~\cite{zhang2022invrender} (Table~\ref{tab:comparison_s4r}). Results show that our method achieves the best albedo decomposition on both datasets while maintaining competitive relighting quality. The palette representation naturally suppresses the high-frequency noise typical of per-primitive estimation and enforces global material consistency, which is especially beneficial for albedo recovery. Our relighting PSNR also ranks close to the strongest baselines (IRGS, SVG-IR), which we attribute to the high quality of the recovered materials---when the decomposed albedo, roughness, and environment map are physically accurate, the relit images are faithful regardless of the rendering formulation.

We further compare palette-based editing against PaletteNeRF~\cite{kuang2023palettenerf} and PaletteGaussian~\cite{ren2024palettegaussian} on the Lego scene using FLIP~\cite{Andersson2020} (Figure~\ref{fig:palette_compare}). RGB-only palette methods exhibit editing leakage into semantically distinct regions, while our BRDF-level palette confines edits to the intended material. Figure~\ref{fig:edit_relight} shows relighting results after editing the roughness and metallic of a single palette entry across multiple scenes. The edited materials produce view-dependent specular highlights that vary naturally across viewpoints and environment maps, confirming that the recovered BRDF parameters are physically meaningful and that palette-level edits propagate consistently under novel illumination. We provide an SH vs.\ path tracing comparison in the supplementary material.

\begin{table}[t]
  \centering
  \small
  \caption{Comparison on the TensoIR dataset. Our method achieves the best albedo decomposition while maintaining competitive relighting quality.}
  \label{tab:comparison_tensoir}
  \setlength{\tabcolsep}{3pt}
  \begin{tabularx}{\linewidth}{l*{5}{>{\centering\arraybackslash}X}}
    \toprule
    & Relight & \multicolumn{3}{c}{Albedo} & Rough. \\
    \cmidrule(lr){2-2}\cmidrule(lr){3-5}\cmidrule(lr){6-6}
    Method & PSNR$\uparrow$ & PSNR$\uparrow$ & SSIM$\uparrow$ & LPIPS$\downarrow$ & MSE$\downarrow$ \\
    \midrule
    TensoIR~\cite{Jin2023TensoIR} & 28.58 & 29.28 & \cellcolor{red!15}0.950 & 0.085 & \cellcolor{red!50}0.013 \\
    GS-IR~\cite{gsir2023} & 24.37 & \cellcolor{red!15}30.29 & 0.941 & 0.084 & 0.027 \\
    R3DG~\cite{R3DG} & 24.41 & 29.47 & 0.930 & 0.107 & \cellcolor{red!30}0.016 \\
    IRGS~\cite{IRGS} & \cellcolor{red!30}29.91 & \cellcolor{red!30}33.80 & \cellcolor{red!30}0.954 & \cellcolor{red!30}0.074 & 0.055 \\
    SVG-IR~\cite{sun25svgir} & \cellcolor{red!50}30.29 & 30.13 & 0.947 & \cellcolor{red!15}0.076 & 0.033 \\
    RadioGS~\cite{han2026radiogs} & 26.00 & 27.62 & 0.937 & 0.085 & \cellcolor{red!15}0.025 \\
    \midrule
    Ours & \cellcolor{red!15}28.90 & \cellcolor{red!50}34.16 & \cellcolor{red!50}0.960 & \cellcolor{red!50}0.067 & 0.043 \\
    \bottomrule
  \end{tabularx}
\end{table}

\begin{table}[t]
  \centering
  \small
  \caption{Comparison on the Synthetic4Relight dataset. Our method achieves the best albedo PSNR and roughness estimation with competitive relighting quality, demonstrating strong cross-dataset generalization.}
  \label{tab:comparison_s4r}
  \setlength{\tabcolsep}{3pt}
  \begin{tabularx}{\linewidth}{l*{5}{>{\centering\arraybackslash}X}}
    \toprule
    & Relight & \multicolumn{3}{c}{Albedo} & Rough. \\
    \cmidrule(lr){2-2}\cmidrule(lr){3-5}\cmidrule(lr){6-6}
    Method & PSNR$\uparrow$ & PSNR$\uparrow$ & SSIM$\uparrow$ & LPIPS$\downarrow$ & MSE$\downarrow$ \\
    \midrule
    GS-IR~\cite{gsir2023} & 25.40 & 19.48 & 0.896 & 0.117 & 0.011 \\
    R3DG~\cite{R3DG} & 31.00 & \cellcolor{red!15}28.31 & \cellcolor{red!30}0.951 & \cellcolor{red!30}0.058 & 0.013 \\
    IRGS~\cite{IRGS} & \cellcolor{red!50}34.90 & \cellcolor{red!30}30.81 & \cellcolor{red!50}0.957 & \cellcolor{red!50}0.055 & \cellcolor{red!30}0.008 \\
    SVG-IR~\cite{sun25svgir} & \cellcolor{red!15}31.55 & 28.29 & 0.937 & 0.069 & 0.020 \\
    RadioGS~\cite{han2026radiogs} & 29.81 & 27.03 & \cellcolor{red!15}0.949 & \cellcolor{red!15}0.063 & \cellcolor{red!15}0.010 \\
    \midrule
    Ours & \cellcolor{red!30}32.45 & \cellcolor{red!50}31.88 & 0.946 & 0.083 & \cellcolor{red!50}0.007 \\
    \bottomrule
  \end{tabularx}
\end{table}

\subsection{Ablation Studies}

We ablate the key design choices: initial palette size ($K$), palette merging (+Merge), and per-material refinement (+Refine). Table~\ref{tab:ablation} reports rendering quality (NVS PSNR, SSIM, LPIPS) and material decomposition quality (Albedo PSNR, Roughness MSE) on the TensoIR dataset.

\begin{table}[t]
  \centering
  \small
  \caption{Ablation study. $K$ denotes the initial palette count. +Merge enables palette merging, and +Refine enables per-material refinement.}
  \label{tab:ablation}
  \setlength{\tabcolsep}{4.5pt}
  \renewcommand{\arraystretch}{1.08}
  \begin{tabular}{
    @{}l
    S[table-format=2.2]
    S[table-format=2.2]
    S[table-format=1.4]
    S[table-format=1.4]
    S[table-format=1.4]
    @{}
  }
    \toprule
    \multirow{2}{*}{Configuration}
    & \multicolumn{1}{c}{NVS}
    & \multicolumn{3}{c}{Albedo}
    & \multicolumn{1}{c}{Roughness} \\
    \cmidrule(lr){2-2}
    \cmidrule(lr){3-5}
    \cmidrule(lr){6-6}
    & {PSNR$\uparrow$}
    & {PSNR$\uparrow$}
    & {SSIM$\uparrow$}
    & {LPIPS$\downarrow$}
    & {MSE$\downarrow$} \\
    \midrule
    $K{=}4$                  & 29.95 & 31.94 & 0.9475 & 0.0852 & 0.0302 \\
    $K{=}8$                  & 29.65 & 32.23 & 0.9490 & 0.0827 & 0.0312 \\
    $K{=}24$ + Merge          & 31.22 & 31.90 & 0.9487 & 0.0721 & 0.0347 \\
    $K{=}8$ + Merge           & 30.40 & 32.37 & 0.9492 & 0.0827 & 0.0305 \\
    \midrule
    $K{=}8$ + Merge + Refine   & 32.19 & 32.90 & 0.9506 & 0.0789 & 0.0285 \\
    \bottomrule
  \end{tabular}
\end{table}

\paragraph{Effect of initial palette size ($K$).}
Starting with $K{=}24$ and merging yields higher NVS PSNR than $K{=}8$ without merging but lower than $K{=}8$ +Merge. A larger initial palette provides more capacity but the merging process must work harder to collapse similar entries; moreover, with too many entries the optimization tends to bake lighting into separate palette materials instead of factoring it into the environment map, producing entries that are visually similar but numerically distinct and thus difficult to merge. $K{=}4$ produces low albedo quality, indicating that too few palette entries cannot adequately represent the material diversity of the scene. Moreover, higher albedo or roughness metrics do not necessarily imply a better palette assignment: when $K$ is too small, visually distinct materials are forced to share palette entries, making subsequent palette-level editing difficult or impossible.

\paragraph{Effect of palette merging (+Merge).}
Comparing $K{=}8$ +Merge against $K{=}8$ without merging, merging improves both NVS and Albedo PSNR. Merging allows the optimization to discover the natural material count of the scene by removing redundant palette entries, which reduces over-segmentation and concentrates representational capacity on genuinely distinct materials.

\paragraph{Effect of per-material refinement (+Refine).}
Comparing $K{=}8$ +Merge +Refine against $K{=}8$ +Merge, the refinement stage consistently improves both NVS and Albedo PSNR. The improvement is moderate rather than drastic, and the required sparsity ratio is well below the 16\% used in our experiments, confirming that only a small fraction of Gaussians deviate from the palette. This preserves the strong editability of the palette: the vast majority of Gaussians remain palette-driven, and per-Gaussian corrections are confined to localized regions where the PBR decomposition falls short.

%% file: chapters/discussion.tex
\section{DISCUSSION}

% The main benefit of the palette design is that it changes how errors are allowed to enter the decomposition. In per-Gaussian inverse rendering, each primitive can absorb errors independently. In our method, a material prototype must explain many Gaussians at once, so spatially varying shadows or visibility errors are less likely to be mistaken for material variation.
% The palette-based design introduces a structural prior that offers three main advantages. First, global material consistency: by sharing palette entries across Gaussians, the decomposition naturally avoids the high-frequency noise typical of per-primitive estimation and acts as a spatial regularizer that prevents residual lighting from being baked into albedo---since all Gaussians assigned to the same entry must share identical material parameters, the optimization is forced to attribute spatially varying shading to illumination rather than albedo. Second, compactness: reducing the material space from per-primitive parameters to a small palette plus a spatial field concentrates capacity where it matters most, and editing one entry immediately affects all associated regions. Third, the continuous spatial material field ties assignments to geometry through position, yielding robust material predictions even in sparsely observed regions.

The main advantage of the palette-based formulation is that it changes how reconstruction errors are explained during inverse rendering. In conventional per-Gaussian optimization, each primitive can independently absorb shading residuals, causing shadows, visibility errors, or indirect illumination to be baked into local material estimates. In contrast, our method constrains many Gaussians to share the same BRDF prototype, making spatially varying effects harder to explain through independent material fitting and encouraging the optimization to attribute them to illumination instead of albedo.

This structural prior brings three benefits. First, shared palette entries enforce \emph{global material consistency} and suppress the high-frequency noise typical of per-primitive estimation. Second, the compact palette representation greatly \emph{reduces the dimensionality of the material space} while enabling intuitive editing, since modifying one palette entry consistently affects all associated regions. Third, the continuous spatial material field links material assignments to geometry, \emph{improving robustness in sparsely observed regions}.

At the same time, the method does not fully resolve the tension between physical interpretability and visual fidelity. While the PBR model captures dominant reflectance behavior, the pre-trained Gaussian appearance may still contain residual effects that cannot be explained by a compact BRDF model. The SH-blended training stage absorbs part of these discrepancies, but excessive residual compensation may require stronger per-Gaussian refinement, weakening the regularization effect of the palette.

Another limitation is the difficulty of representing complex spatially-varying textures. The palette assumes that materials can be described by a small set of shared prototypes, which works well for piecewise-uniform materials but struggles with high-frequency appearance variation such as wood grain, fabric weave, or weathered surfaces. In real-world scenes where material appearance is highly diverse and fine-grained, the assumption that materials cluster into a compact set breaks down, and the palette may either over-segment the scene or produce overly smooth decompositions that lose important detail.

Finally, our framework depends on the geometric fidelity of the pre-trained Ref-GS~\cite{zhang2024ref} representation. Errors in normals, visibility, or surface coverage can directly affect decomposition quality, suggesting future directions in joint geometry-material optimization, richer BRDF models, and hierarchical or uncertainty-aware palette representations.

%% file: chapters/conclusion.tex
\section{CONCLUSION}

We present a palette-based inverse rendering framework for 2D Gaussian Splatting that enables compact, editable, and physically based material decomposition. Instead of assigning unconstrained BRDF parameters to individual Gaussians, our method represents scene materials using a shared palette of BRDF prototypes predicted through a continuous spatial material field. Combined with physically based rendering, residual weight estimation, and lightweight per-Gaussian refinement, the framework produces unsupervised material decompositions that remain both visually faithful and physically coherent without requiring ground-truth material annotations. The resulting representation supports intuitive material editing, including palette-level appearance modification, cross-scene material transfer, and relighting under novel illumination.

The broader insight is that the main bottleneck in Gaussian inverse rendering is not only how accurately we render, but also how freely we allow materials to vary. Reducing unnecessary local material freedom makes the recovered attributes more coherent, more editable, and less likely to absorb baked illumination. 

Several directions remain open for future work, including tighter joint optimization of geometry and materials, richer BRDF parameterizations for complex layered surfaces, semantically driven material editing that leverages language models to associate palette entries with meaningful material categories, and cross-scene palette sharing for improved scalability and controllability.

%% file: chapters/images.tex
\clearpage

\begin{figure*}[t]
  \centering
  \includegraphics[width=\textwidth]{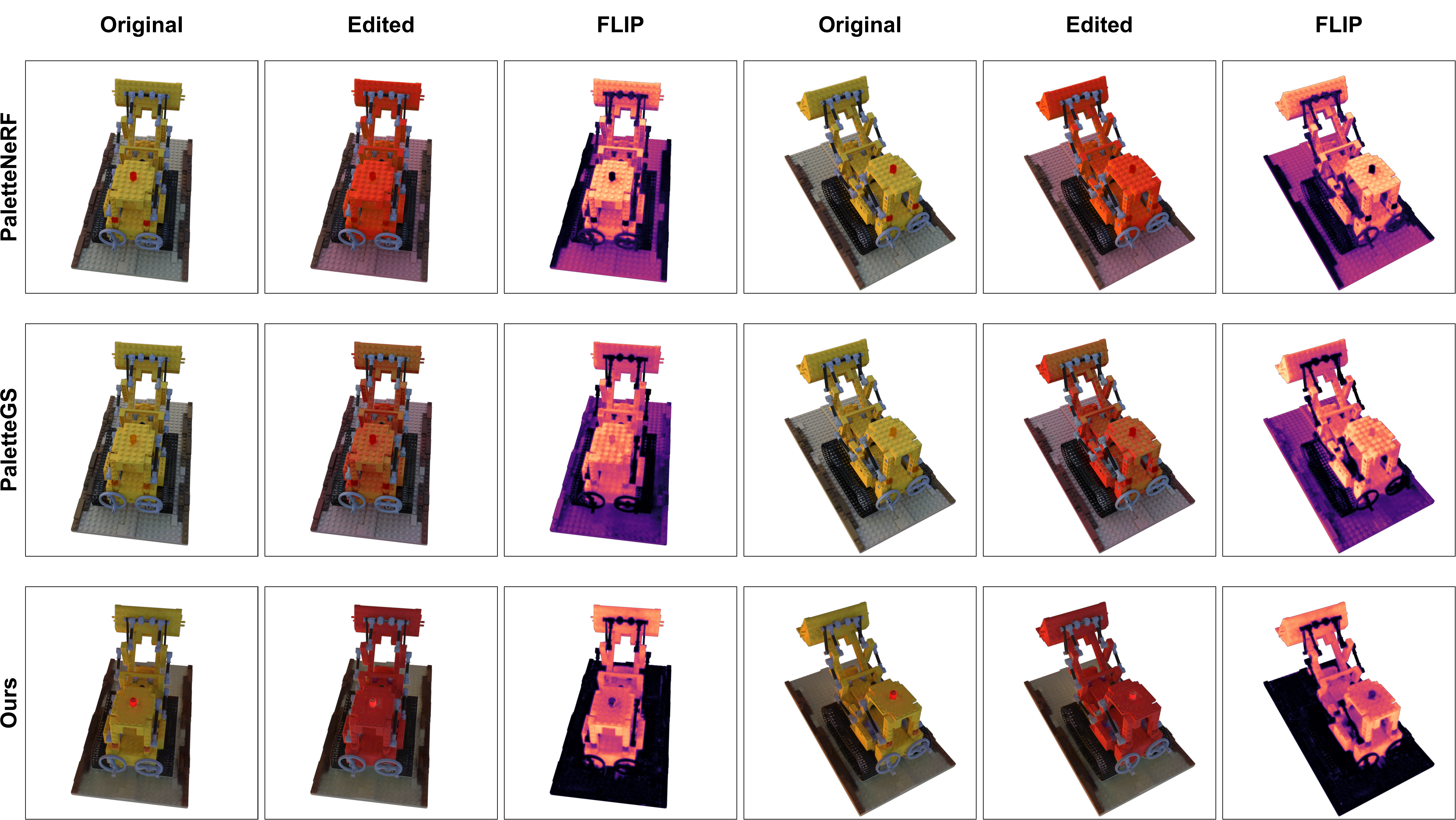}
  \captionof{figure}{Palette-based editing comparison on the Lego scene with PaletteNeRF~\cite{kuang2023palettenerf} and PaletteGaussian~\cite{ren2024palettegaussian}, evaluated using FLIP~\cite{Andersson2020}. RGB-only palette methods exhibit visible editing leakage, while our BRDF-level palette confines edits to the intended material region.}
  \label{fig:palette_compare}

  \includegraphics[width=\textwidth]{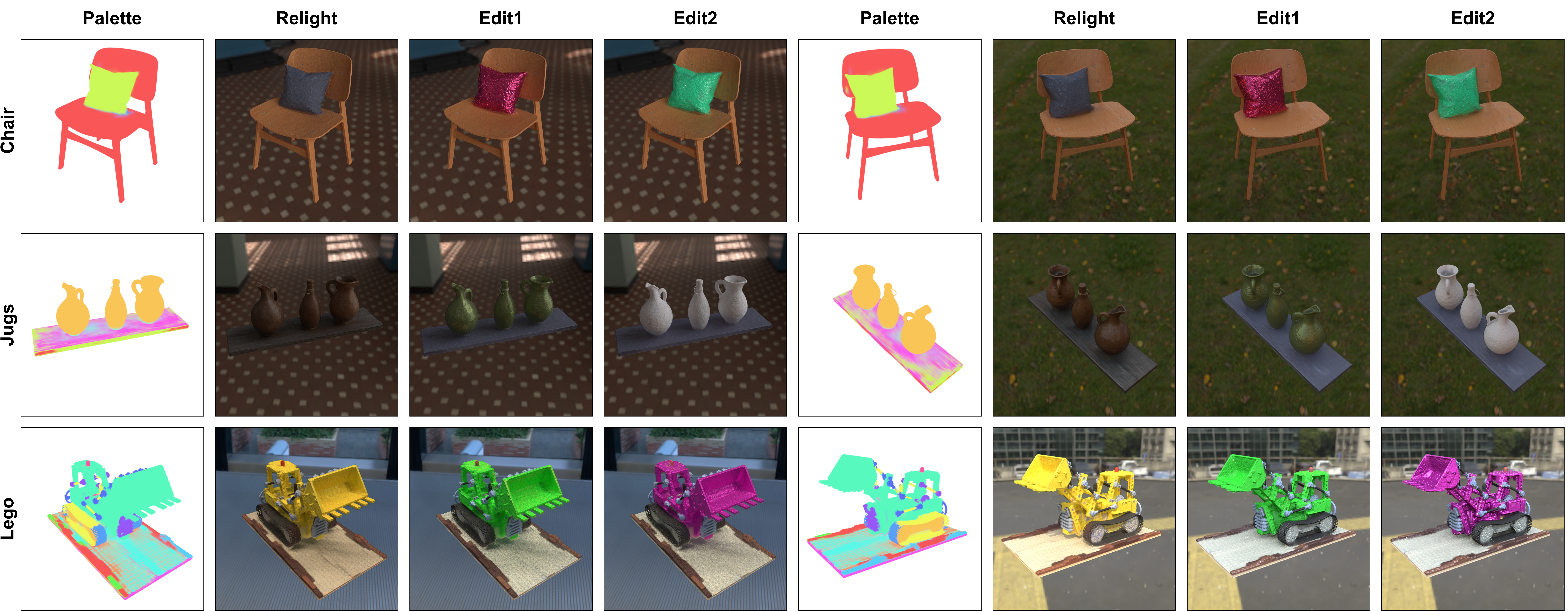}
  \captionof{figure}{Relighting results after palette-level material editing across multiple scenes and environment maps. We edit roughness and metallic of a single palette entry, producing view-dependent specular highlights that vary naturally across viewpoints and illumination---confirming that the recovered materials are physically meaningful rather than baked appearance.}
  \label{fig:edit_relight}
\end{figure*}

\begin{figure*}[t]
  \centering
  \includegraphics[width=\textwidth]{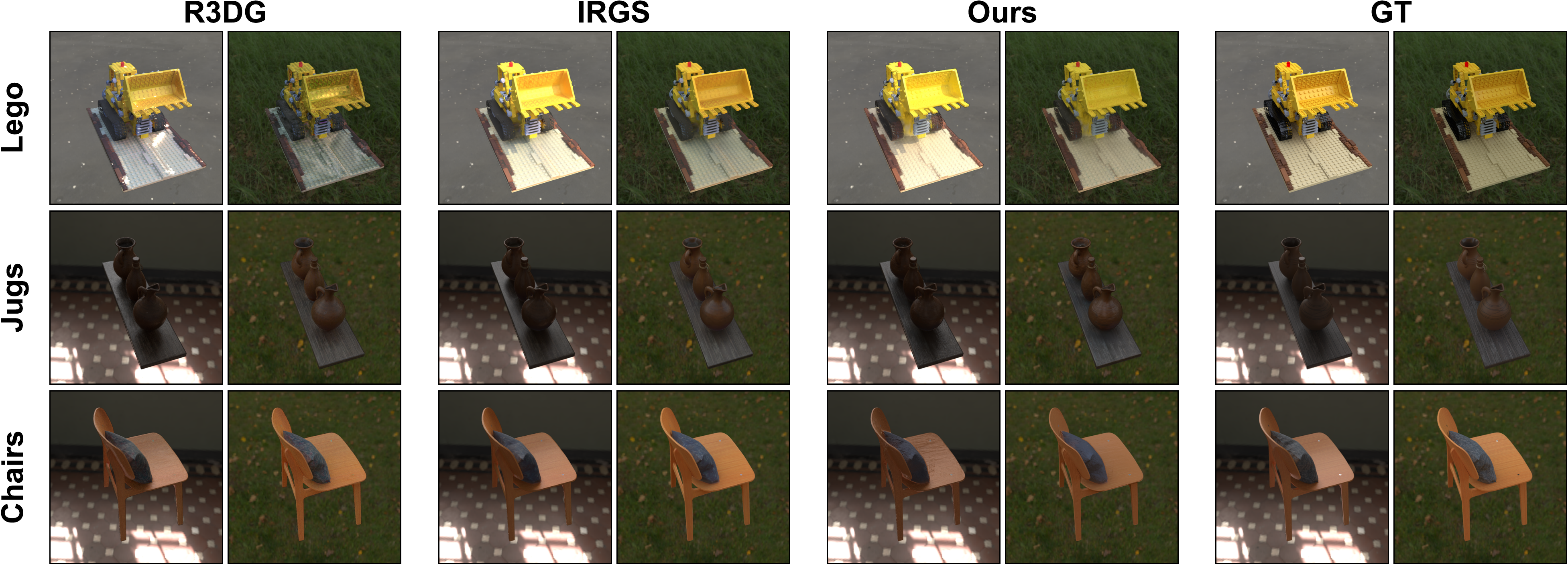}
  \caption{Relighting comparison. Our method produces physically plausible relighting results under novel environment maps, comparable to recent inverse rendering baselines. The palette-based decomposition recovers accurate materials and environment maps, enabling faithful relighting without baking illumination into material parameters.}
  \label{fig:relight_comparison}
\end{figure*}

\begin{figure*}[t]
  \centering
  \begin{subfigure}[b]{0.49\textwidth}
    \includegraphics[width=\textwidth]{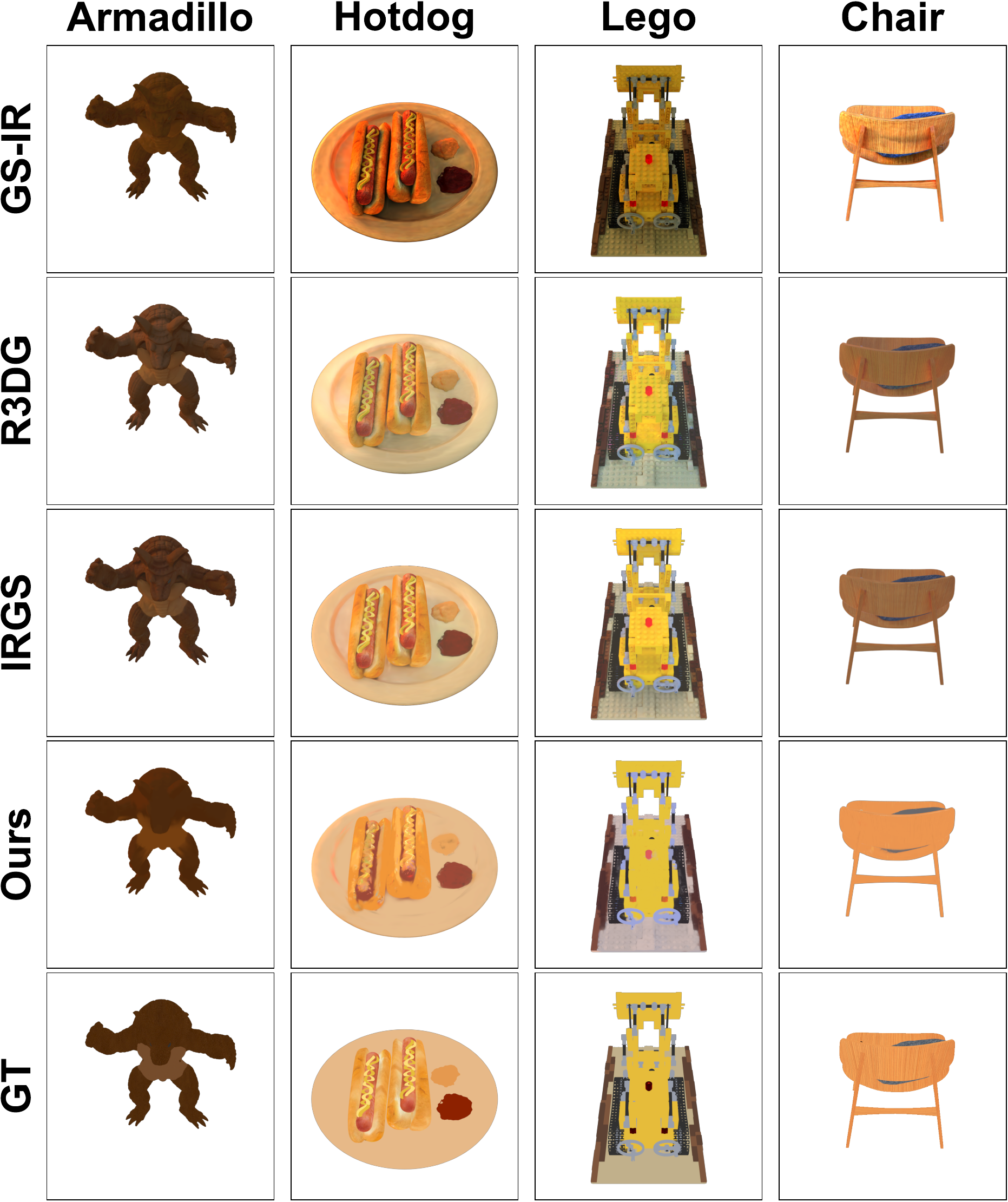}
    \caption{Albedo}
    \label{fig:albedo_comparison}
  \end{subfigure}
  \hfill
  \begin{subfigure}[b]{0.49\textwidth}
    \includegraphics[width=\textwidth]{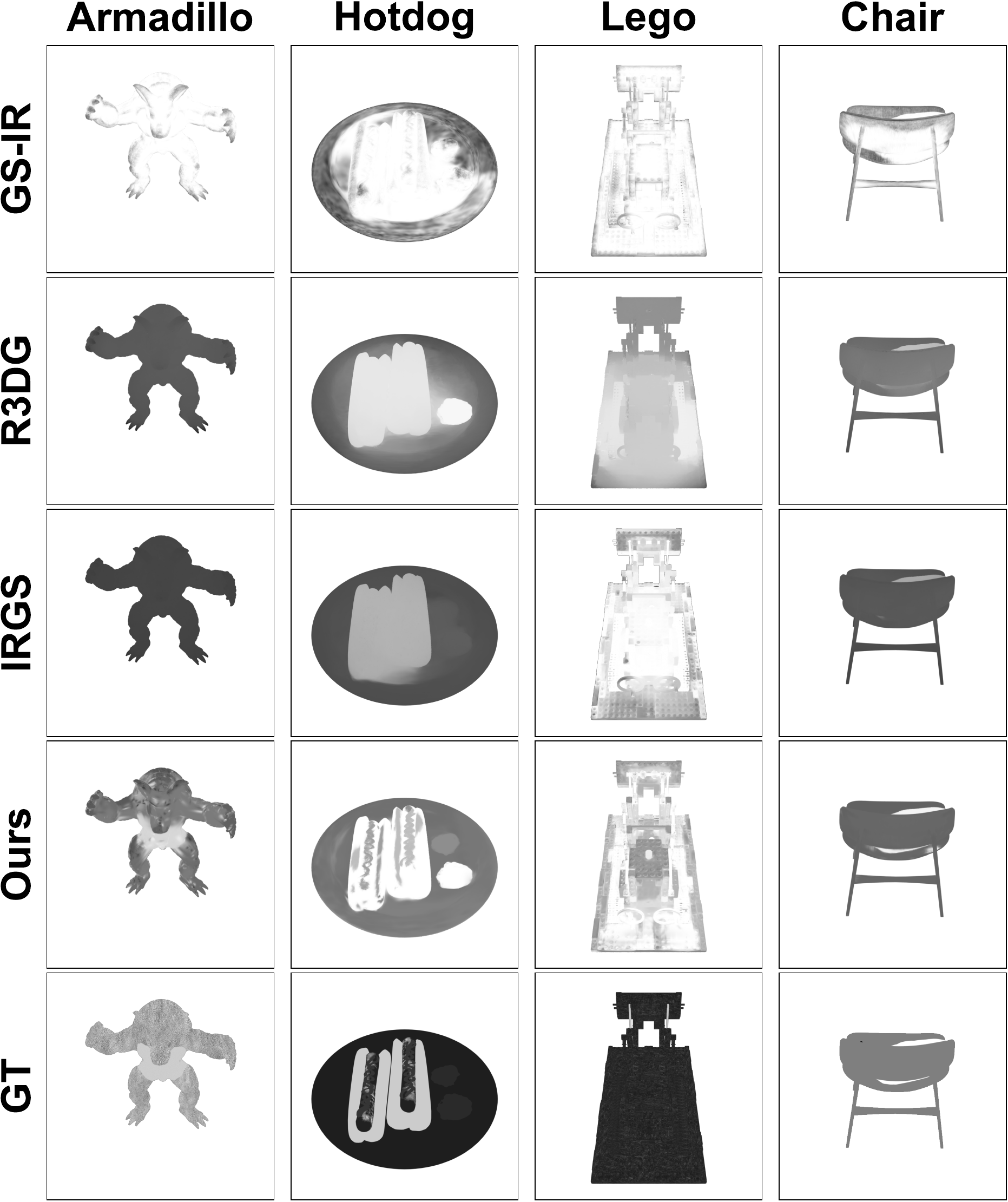}
    \caption{Roughness}
    \label{fig:roughness_comparison}
  \end{subfigure}
  \caption{Material decomposition comparison. (a) Albedo and (b) roughness maps recovered by each method. By sharing BRDF prototypes across Gaussians, our palette-based approach suppresses the high-frequency noise and spatial inconsistency typical of per-primitive estimation, producing cleaner and more globally coherent decompositions.}
  \label{fig:material_comparison}
\end{figure*}

%% file: chapters/supplement.tex
\clearpage

\appendix

\section{Appendix}

This appendix collects compact mathematical material that supports the main method section. It is written as a concise technical summary rather than a full derivation chapter, making it suitable for supplementary details in the chapter-based manuscript structure.

\subsection{Implementation Details}

Table~\ref{tab:hyperparameters} lists the default hyperparameters used across all experiments.

\begin{table}[b]
  \centering
  \small
  \caption{Default hyperparameters.}
  \label{tab:hyperparameters}
  \begin{tabularx}{\linewidth}{lXr}
    \toprule
    Parameter & Symbol & Default \\
    \midrule
    Number of palettes & $K$ & 8 \\
    MLP hidden dim / layers & $H$ / $D$ & 64 / 3 \\
    PE frequencies & $L$ & 6 \\
    Temperature & $\tau_0 \to \tau_{\min}$ & 0.1 $\to$ 0.01 \\
    Albedo threshold & $\epsilon_a$ & 0.15 \\
    Envmap resolution & --- & 32 \\
    Albedo scale & $a_s$ & 0.94 \\
    Albedo bias & $a_b$ & 0.03 \\
    Residual blend max & $w_{\max}$ & 0.8 \\
    \bottomrule
  \end{tabularx}
\end{table}

\subsection{Notation}

Table~\ref{tab:notation} summarizes the primary symbols used throughout the manuscript.

\begin{table}[b]
  \centering
  \caption{Main notation used in the paper.}
  \label{tab:notation}
  \footnotesize
  \setlength{\tabcolsep}{4pt}
  \begin{tabular}{>{\raggedright\arraybackslash}p{0.09\linewidth}>{\raggedright\arraybackslash}p{0.33\linewidth}>{\raggedright\arraybackslash}p{0.09\linewidth}>{\raggedright\arraybackslash}p{0.33\linewidth}}
    \toprule
    Symbol & Meaning & Symbol & Meaning \\
    \midrule
    $N$ & Number of Gaussians & $K$ & Number of material palette entries \\
    $\mathbf{m}_k$ & $k$-th palette material & $\mathcal{F}_{\theta}$ & Spatial material field \\
    $\tau$ & Softmax temperature & $w_{ik}$ & Assignment weight \\
    $\alpha_i$ & Gaussian opacity & $T_i$ & Transmittance \\
    $\mathbf{n}$ & Surface normal & $\boldsymbol{\omega}_o$ & Outgoing direction \\
    $\boldsymbol{\omega}_i$ & Incoming direction & $\mathbf{h}$ & Half-vector \\
    $f_s$ & Specular BRDF & $L_{\mathrm{env}}$ & Environment radiance \\
    $V$ & Visibility & $S$ & Number of light samples \\
    $w_i$ & Residual weight & $\epsilon_a$ & Truncation threshold \\
    $\epsilon_{\mathrm{merge}}$ & Merge threshold & $F_0$ & Base reflectivity \\
    $\alpha$ & GGX roughness squared & $\Omega_j$ & Monte Carlo sample weight \\
    \bottomrule
  \end{tabular}
\end{table}

\subsection{Theoretical Analysis}

\paragraph{Albedo uniformity.}
Under the palette-based representation with sufficiently low temperature $\tau$, the albedo of all Gaussians assigned to the same palette entry is identical by construction. Let Gaussian $i$ be assigned to palette $k^*$, i.e.\ $w_{ik^*}\to 1$ as $\tau\to 0$. Then
\begin{equation}
\mathbf{a}_i=\sum_{k=1}^{K}w_{ik}\mathbf{a}_k\xrightarrow{\tau\to 0}\mathbf{a}_{k^*}.
\end{equation}
Since $\mathbf{a}_{k^*}$ is a shared parameter, all Gaussians assigned to palette $k^*$ share the same albedo. This is a structural advantage over per-Gaussian representations, where even with smoothness regularization, nearby Gaussians can drift to slightly different albedo values, causing visible seams and non-physical texture.

\paragraph{Expressiveness of soft blending.}
While the dominant palette assignment is near-one-hot for low $\tau$, the soft blending provides smooth transitions at material boundaries and allows sub-material variation through differing blend ratios. The effective per-pixel material space is the convex hull of the $K$ palette entries:
\begin{equation}
\mathcal{M}_{\mathrm{eff}}=\left\{\sum_{k=1}^{K}w_k\mathbf{m}_k\;\middle|\;w_k\geq 0,\;\sum_{k=1}^{K}w_k=1\right\}\subseteq\mathbb{R}^{5},
\end{equation}
a $(K{-}1)$-dimensional simplex in material space that is richer than $K$ discrete materials while remaining globally consistent.

\paragraph{Spatial coherence via Lipschitz continuity.}
We now show that the spatial material field is Lipschitz continuous and derive an explicit bound on the material variation between nearby positions.

\begin{proposition}
Let $\mathbf{w}(\mathbf{x})=\operatorname{softmax}(\mathcal{F}_\theta(\gamma(\bar{\mathbf{x}}))/\tau)$ be the palette assignment field. Then for any two positions $\mathbf{x}_1,\mathbf{x}_2$,
\begin{equation}
\lVert\mathbf{w}(\mathbf{x}_1)-\mathbf{w}(\mathbf{x}_2)\rVert_2\leq L_{\mathcal{F}}\,\lVert\mathbf{x}_1-\mathbf{x}_2\rVert_2,
\end{equation}
where $L_{\mathcal{F}}=L_{\mathrm{sm}}\cdot L_{\mathrm{MLP}}\cdot L_\gamma\cdot L_{\mathrm{norm}}$ decomposes into the product of Lipschitz constants of each stage.
\end{proposition}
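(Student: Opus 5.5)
The plan is to write $\mathbf{w}$ as a composition of four maps and apply the chain rule for Lipschitz constants:
\[
\mathbf{x}\mapsto\bar{\mathbf{x}}\mapsto\gamma(\bar{\mathbf{x}})\mapsto\mathcal{F}_\theta(\gamma(\bar{\mathbf{x}}))\mapsto\operatorname{softmax}(\cdot/\tau).
\]
If $f$ is $L_f$-Lipschitz and $g$ is $L_g$-Lipschitz (both in $\ell_2$), then $g\circ f$ is $L_gL_f$-Lipschitz. Iterating this three times gives the stated product $L_{\mathcal{F}}=L_{\mathrm{sm}}L_{\mathrm{MLP}}L_\gamma L_{\mathrm{norm}}$. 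It therefore suffices to exhibit a finite Lipschitz constant for each stage.

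\textbf{Normalization and encoding.}
The normalization $\bar{\mathbf{x}}=(\mathbf{x}-\mathbf{b}_{\min})\oslash(\mathbf{b}_{\max}-\mathbf{b}_{\min})$ is affine and diagonal. Hence $L_{\mathrm{norm}}=1/\min_j(b_{\max,j}-b_{\min,j})$; if the map is centred to $[-1,1]$ the constant gains a factor of $2$.

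The positional encoding $\gamma$ concatenates $\bar{\mathbf{x}}$ with $\sin(2^\ell\pi\bar x_j)$ and $\cos(2^\ell\pi\bar x_j)$ for $\ell=0,\dots,L-1$. For each coordinate $j$ and frequency $\ell$, the map $t\mapsto(\sin ct,\cos ct)$ is $c$-Lipschitz in $\ell_2$, since it traces a circle at speed $c$. The identity block is $1$-Lipschitz. Summing the squared contributions over coordinates gives
\[
\|\gamma(\mathbf{u})-\gamma(\mathbf{v})\|_2^2\le\Bigl(1+\sum_{\ell}(2^\ell\pi)^2\Bigr)\|\mathbf{u}-\mathbf{v}\|_2^2 .
\]
Hence $L_\gamma=\sqrt{1+\pi^2(4^L-1)/3}$.

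\textbf{MLP.}
The MLP alternates affine layers $W_d\mathbf{z}+\mathbf{b}_d$ with 1-Lipschitz activations such as ReLU, so $L_{\mathrm{MLP}}\le\prod_{d=1}^{D}\|W_d\|_2$. I will state this under the assumption that the activation is 1-Lipschitz; for another activation its Lipschitz constant enters multiplicatively.

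\textbf{Softmax (main step).}
The step needing actual care is bounding the softmax $\mathbf{s}(\mathbf{z})=\operatorname{softmax}(\mathbf{z}/\tau)$. Its Jacobian is $\tfrac1\tau(\operatorname{diag}(\mathbf{p})-\mathbf{p}\mathbf{p}^\top)$ with $\mathbf{p}=\mathbf{s}(\mathbf{z})$. This matrix is symmetric positive semidefinite, and its spectral norm is at most its largest eigenvalue.

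To bound that eigenvalue, I would apply Gershgorin's theorem. Row $i$ has diagonal entry $p_i(1-p_i)$ and off-diagonal absolute sum $p_i\sum_{j\ne i}p_j=p_i(1-p_i)$. Hence every eigenvalue is at most $2\max_i p_i(1-p_i)\le\tfrac12$. A sharper standard result gives $\tfrac12$ directly via $\mathbf{v}^\top(\operatorname{diag}\mathbf{p}-\mathbf{p}\mathbf{p}^\top)\mathbf{v}=\operatorname{Var}_{\mathbf{p}}(v)\le\tfrac14(\max v-\min v)^2\le\tfrac12\|\mathbf{v}\|^2$. Either way the Jacobian norm is at most $\tfrac{1}{2\tau}$. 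By the mean value inequality along the segment between the two logit vectors, which is valid since $\mathbb{R}^K$ is convex, $L_{\mathrm{sm}}=1/(2\tau)$.

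\textbf{Combining.}
Composing the four constants yields the bound in the statement. I would close with a remark that the bound degrades as $\tau\to0$. This is consistent with annealing: near-one-hot assignments are allowed to switch sharply at material boundaries, while coherence holds in regions where the logits are well separated.
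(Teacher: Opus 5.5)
Your proposal is correct and follows the paper's proof essentially step for step. It uses the same four-stage factorization and the same constants $L_{\mathrm{norm}}=\max_i 1/\Delta_i$, $L_\gamma=\sqrt{1+\pi^2(4^L-1)/3}$, $L_{\mathrm{MLP}}\le\prod_l\|W_l\|_2$ and $L_{\mathrm{sm}}=1/(2\tau)$, combined by multiplying Lipschitz constants under composition. The differences are only in the level of justification: your chord-versus-arc argument replaces the paper's Jacobian column-norm computation for $\gamma$, and your Gershgorin or variance argument, together with the mean value inequality, makes explicit the bound $\lambda_{\max}\bigl(\operatorname{diag}(\mathbf{p})-\mathbf{p}\mathbf{p}^\top\bigr)\le\tfrac12$ that the paper simply asserts.
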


\begin{proof}
The field is a composition of four mappings:
\begin{equation}
\mathbf{x}\xrightarrow{g_1}\bar{\mathbf{x}}\xrightarrow{g_2}\gamma(\bar{\mathbf{x}})\xrightarrow{g_3}\mathcal{F}_\theta(\gamma(\bar{\mathbf{x}}))\xrightarrow{g_4}\mathbf{w}(\mathbf{x}).
\end{equation}
We bound the Lipschitz constant of each stage.

\textbf{Stage 1: Coordinate normalization.}
The mapping $g_1\colon\mathbf{x}\mapsto\bar{\mathbf{x}}$ normalizes each coordinate to $[0,1]$ via $\bar{x}_i=(x_i-x_{\min,i})/\Delta_i$ where $\Delta_i=x_{\max,i}-x_{\min,i}$. The Jacobian is diagonal with entries $1/\Delta_i$, so
\begin{equation}
L_{\mathrm{norm}}=\max_{i}\frac{1}{\Delta_i}.
\end{equation}

\textbf{Stage 2: Positional encoding.}
The encoding $\gamma$ maps $\bar{\mathbf{x}}\in\mathbb{R}^3$ to $\mathbb{R}^{3+6L}$. Each input dimension contributes independently to disjoint output dimensions. For input component $x_i$, the output derivatives are: the identity ($1$), $\sin$ at level $l$ with derivative $2^l\pi\cos(2^l\pi x_i)$, and $\cos$ at level $l$ with derivative $2^l\pi\sin(2^l\pi x_i)$. The squared column norm of the Jacobian for dimension $i$ is
\begin{equation}
\lVert\mathbf{j}_i\rVert^2=1+\sum_{l=0}^{L-1}\bigl[(2^l\pi)^2\cos^2(\cdot)+(2^l\pi)^2\sin^2(\cdot)\bigr]=1+\pi^2\sum_{l=0}^{L-1}4^l.
\end{equation}
Since the Jacobian columns have non-overlapping support, $J_\gamma^T J_\gamma$ is diagonal with $\lVert J_\gamma\rVert_2=\max_i\lVert\mathbf{j}_i\rVert$. Evaluating the geometric sum,
\begin{equation}
L_\gamma=\sqrt{1+\frac{\pi^2(4^L-1)}{3}}.
\end{equation}
Note that $L_\gamma$ grows exponentially with the number of frequency bands $L$, reflecting the high-frequency sensitivity of the encoding.

\textbf{Stage 3: MLP.}
For a ReLU network $\mathcal{F}_\theta$ with layer weight matrices $W_1,\ldots,W_D$, ReLU is $1$-Lipschitz, so by submultiplicativity of the spectral norm,
\begin{equation}
L_{\mathrm{MLP}}\leq\prod_{l=1}^{D}\lVert W_l\rVert_2.
\end{equation}
In practice, the spectral norms of the weight matrices are moderate (typically $O(1)$ per layer), and gradient-based training naturally keeps $L_{\mathrm{MLP}}$ bounded.

\textbf{Stage 4: Softmax with temperature.}
The softmax Jacobian is $J_{\mathrm{sm}}=\frac{1}{\tau}(\operatorname{diag}(\boldsymbol{\sigma})-\boldsymbol{\sigma}\boldsymbol{\sigma}^T)$, where $\boldsymbol{\sigma}=\operatorname{softmax}(\mathbf{z})$. The matrix $\operatorname{diag}(\boldsymbol{\sigma})-\boldsymbol{\sigma}\boldsymbol{\sigma}^T$ is the covariance matrix of a categorical distribution with probabilities $\boldsymbol{\sigma}$, which has maximum eigenvalue at most $1/2$ (achieved when $K=2$ and $\sigma_1=\sigma_2=1/2$). Therefore,
\begin{equation}
L_{\mathrm{sm}}=\frac{1}{2\tau}.
\end{equation}

\textbf{Composition.}
By the chain rule, the overall Lipschitz constant satisfies
\begin{multline}
L_{\mathcal{F}}=L_{\mathrm{sm}}\cdot L_{\mathrm{MLP}}\cdot L_\gamma\cdot L_{\mathrm{norm}}\\
=\frac{1}{2\tau}\left(\prod_{l=1}^{D}\|W_l\|_2\right)\sqrt{1+\frac{\pi^2(4^L-1)}{3}}\;\max_i\frac{1}{\Delta_i}.
\end{multline}
\end{proof}

\begin{corollary}
The material variation between two positions is bounded by
\begin{equation}
\lVert\mathbf{m}(\mathbf{x}_1)-\mathbf{m}(\mathbf{x}_2)\rVert_2\leq M_p\sqrt{K}\cdot L_{\mathcal{F}}\,\lVert\mathbf{x}_1-\mathbf{x}_2\rVert_2,
\end{equation}
where $M_p=\max_k\lVert\mathbf{m}_k\rVert_2$ is the maximum palette entry norm.
\end{corollary}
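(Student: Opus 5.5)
The corollary follows by composing the palette-blending map with the Lipschitz bound of the preceding proposition. Write $\mathbf{m}(\mathbf{x})=\sum_{k=1}^{K}w_k(\mathbf{x})\,\mathbf{m}_k=P\,\mathbf{w}(\mathbf{x})$, where $P\in\mathbb{R}^{5\times K}$ has the palette entries $\mathbf{m}_k$ as columns. Since the palette is a shared parameter independent of position, we get
\begin{equation}
\mathbf{m}(\mathbf{x}_1)-\mathbf{m}(\mathbf{x}_2)=P\bigl(\mathbf{w}(\mathbf{x}_1)-\mathbf{w}(\mathbf{x}_2)\bigr).
\end{equation}
Hence $\lVert\mathbf{m}(\mathbf{x}_1)-\mathbf{m}(\mathbf{x}_2)\rVert_2\le\lVert P\rVert_2\,\lVert\mathbf{w}(\mathbf{x}_1)-\mathbf{w}(\mathbf{x}_2)\rVert_2$. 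The whole task therefore reduces to bounding the operator norm of $P$ by $M_p\sqrt{K}$.

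For that step I would use $\lVert P\rVert_2\le\lVert P\rVert_F=\bigl(\sum_{k=1}^{K}\lVert\mathbf{m}_k\rVert_2^2\bigr)^{1/2}\le\sqrt{K}\,M_p$. A direct alternative gives the same constant: write $\mathbf{d}=\mathbf{w}(\mathbf{x}_1)-\mathbf{w}(\mathbf{x}_2)$. By the triangle inequality,
\begin{equation}
\Bigl\lVert\sum_k d_k\mathbf{m}_k\Bigr\rVert_2\le M_p\lVert\mathbf{d}\rVert_1,
\end{equation}
and Cauchy--Schwarz gives $\lVert\mathbf{d}\rVert_1\le\sqrt{K}\lVert\mathbf{d}\rVert_2$. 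Either route produces the $\sqrt{K}$ factor. Substituting the proposition's bound $\lVert\mathbf{w}(\mathbf{x}_1)-\mathbf{w}(\mathbf{x}_2)\rVert_2\le L_{\mathcal{F}}\lVert\mathbf{x}_1-\mathbf{x}_2\rVert_2$ then gives the claim.

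The only point that needs care is what "$\mathbf{m}_k$" means in $M_p$. The blending in the spatial material field acts on the activated parameters $(\mathbf{a}_k,r_k,m_k)$, so $M_p$ should be the maximum norm of the activated entries. With the stated activations this is automatically finite: albedo lies in $[a_b,a_s+a_b]^3$ and $r,m\in[0,1]$, so $M_p\le\sqrt{3(a_s+a_b)^2+2}$.

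The corollary concerns the pure palette material $\mathbf{m}_i$, not $\mathbf{m}_i^{\mathrm{f}}$ or $\mathbf{m}_i^{\mathrm{final}}$, which include per-Gaussian residuals that are not functions of position. I would state this restriction explicitly rather than try to extend the bound to those quantities.

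A final remark is that the $\sqrt{K}$ factor is loose. Because $\mathbf{d}$ sums to zero, one could subtract any common vector from every $\mathbf{m}_k$ and replace $M_p$ by the palette's radius about its centroid. This sharpening is not needed for the statement as written.
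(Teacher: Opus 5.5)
Your proof is correct and matches the paper's. Your ``direct alternative'' is exactly the paper's argument: the triangle inequality gives $M_p\lVert\mathbf{d}\rVert_1$, then $\lVert\mathbf{d}\rVert_1\le\sqrt{K}\lVert\mathbf{d}\rVert_2$, then the proposition's Lipschitz bound. Your primary route via $\lVert P\rVert_2\le\lVert P\rVert_F\le\sqrt{K}M_p$ is an equivalent repackaging of the same linear estimate, and your side remarks (taking $M_p$ over the vectors actually blended, restricting to the pure palette material, and the centroid sharpening from $\sum_k d_k=0$) are sound additions the paper does not make.
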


\begin{proof}
Since $\mathbf{m}(\mathbf{x})=\sum_k w_k(\mathbf{x})\mathbf{m}_k$,
\begin{align}
\lVert\mathbf{m}(\mathbf{x}_1)-\mathbf{m}(\mathbf{x}_2)\rVert_2&=\left\lVert\sum_k\bigl(w_k(\mathbf{x}_1)-w_k(\mathbf{x}_2)\bigr)\mathbf{m}_k\right\rVert_2\nonumber\\
&\leq\lVert\mathbf{w}(\mathbf{x}_1)-\mathbf{w}(\mathbf{x}_2)\rVert_1\cdot M_p.
\end{align}
Converting between norms, $\lVert\cdot\rVert_1\leq\sqrt{K}\,\lVert\cdot\rVert_2$, and applying the Lipschitz bound on $\mathbf{w}$ yields the result.
\end{proof}

This bound has a natural interpretation: nearby Gaussians receive similar palette assignments because the material field is a continuous function of position, and the bound tightens as $\tau$ decreases (sharper assignments). The dominant factor is typically $L_{\mathrm{MLP}}$, which training keeps moderate.

\subsection{SH vs.\ Path-Traced Indirect Illumination}

During inverse rendering (Stage~2), we approximate indirect illumination using the pre-trained SH radiance field rather than differentiating through expensive multi-bounce path tracing. Figure~\ref{fig:sh_pt_compare} compares this SH approximation against full path tracing on the Chair scene. The low FLIP error confirms that the SH cache faithfully reproduces indirect illumination for single-object scenes, validating its use as a lightweight substitute during optimization. Once materials are decomposed, forward relighting with edited materials uses path tracing instead to correctly capture global illumination effects such as inter-object light transport and material-dependent interreflection.

\begin{figure}[t]
  \centering
  \includegraphics[width=\columnwidth]{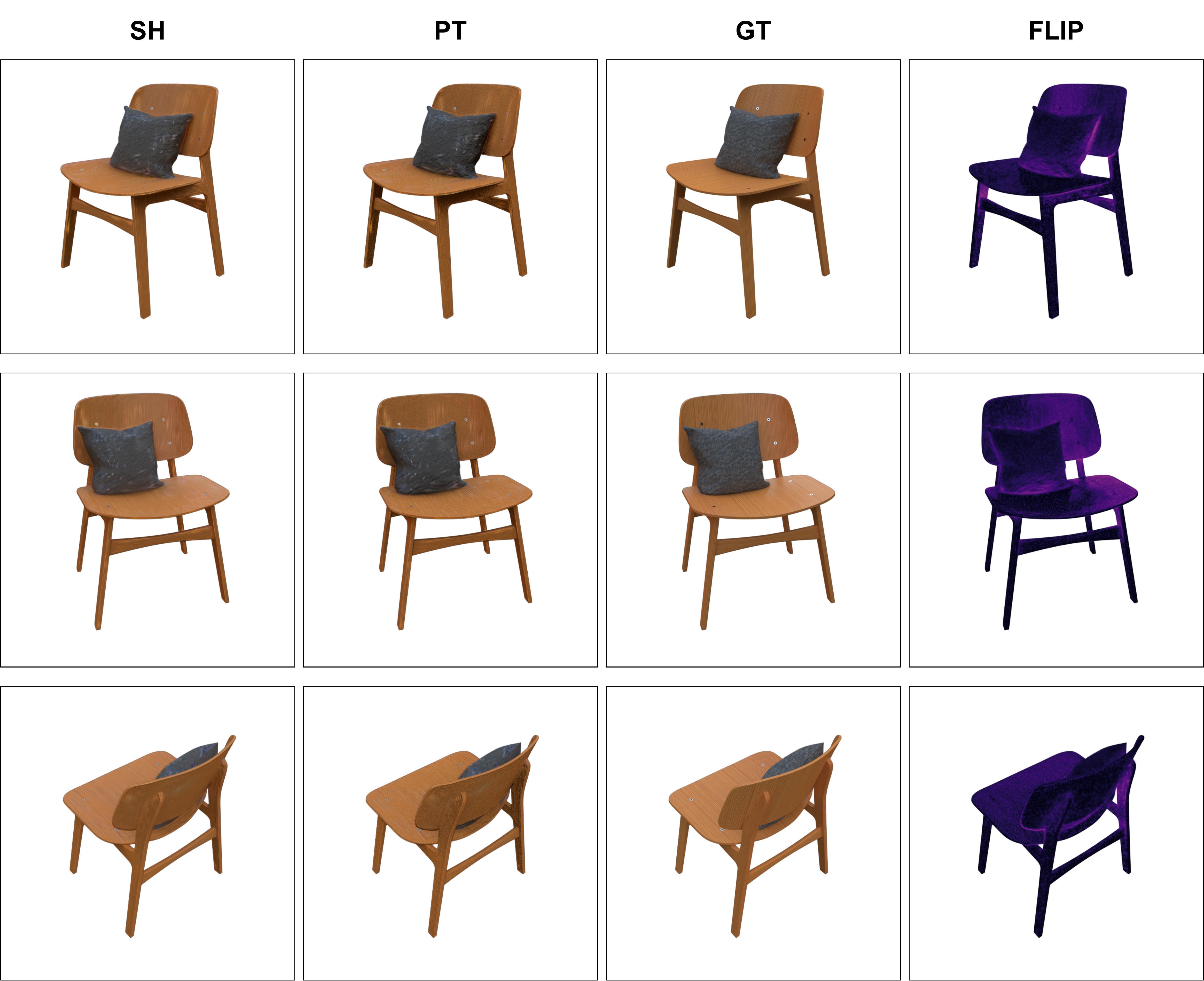}
  \caption{SH-based vs.\ path-traced indirect illumination on the Chair scene, compared via FLIP~\cite{Andersson2020}.}
  \label{fig:sh_pt_compare}
\end{figure}

\subsection{Failure Cases}

\begin{figure}[!t]
  \centering
  \setlength{\abovecaptionskip}{2pt}
  \setlength{\belowcaptionskip}{0pt}
  \includegraphics[width=\columnwidth]{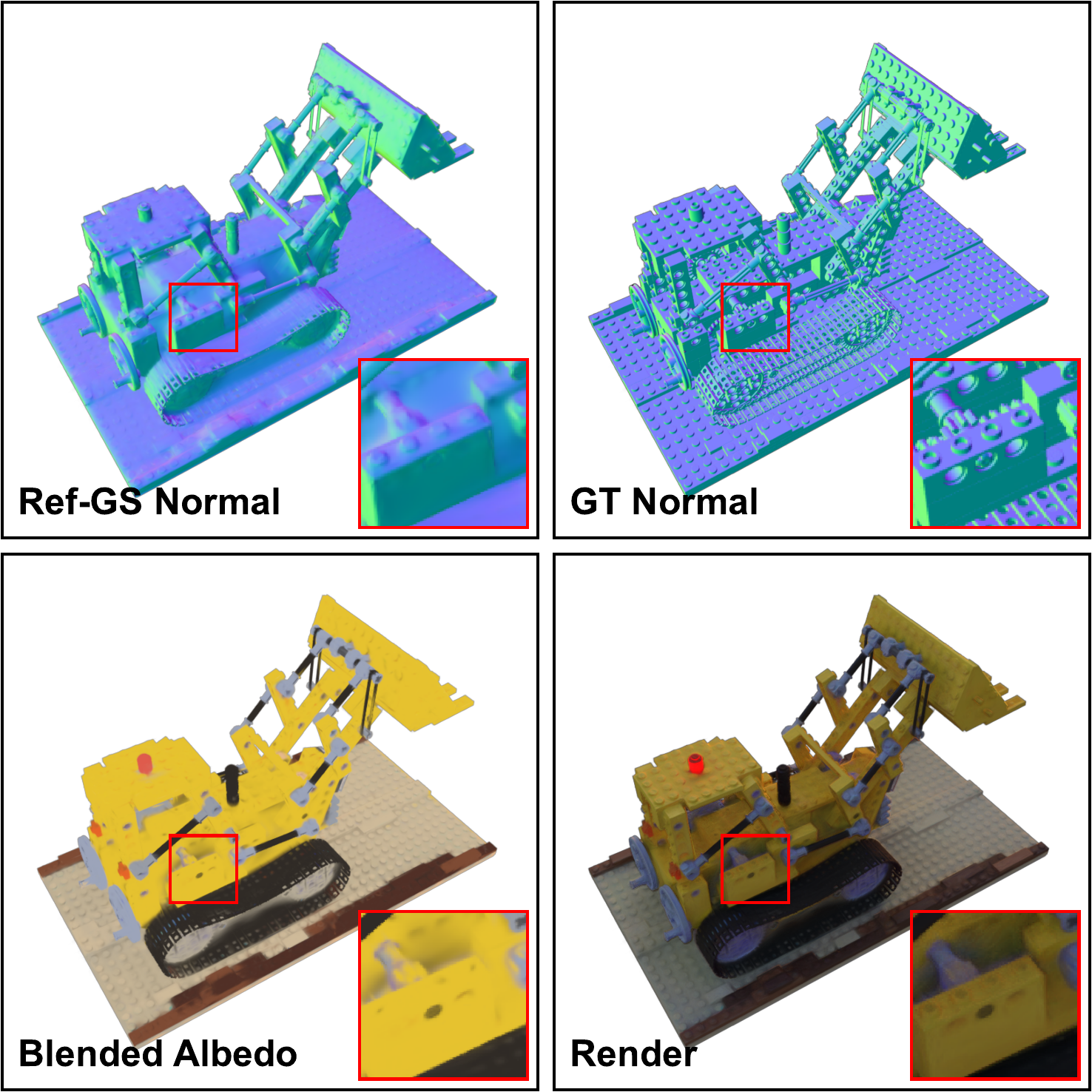}
  \caption{Geometry-induced failure case on the Lego scene. The reconstructed normals from Ref-GS (top left) deviate from the ground truth normals (top right), causing the PBR shading to produce incorrect lighting. The per-material refinement stage compensates by baking the residual shading into the albedo (bottom left), yielding a visually plausible final render (bottom right) at the cost of physical interpretability.}
  \label{fig:failure_geometry}
\end{figure}

\begin{figure}[!t]
  \centering
  \setlength{\abovecaptionskip}{2pt}
  \setlength{\belowcaptionskip}{0pt}
  \includegraphics[width=\columnwidth]{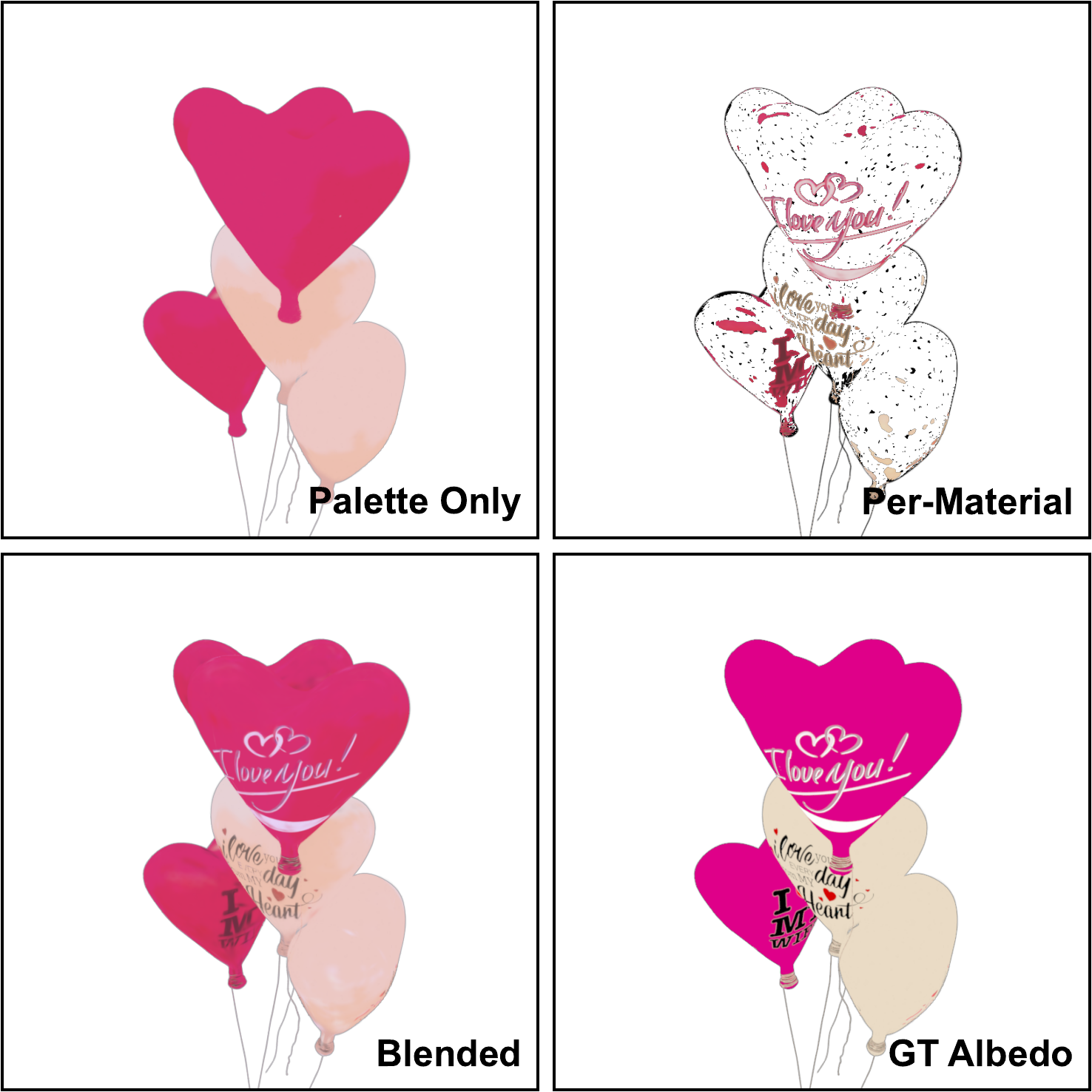}
  \caption{Failure case on the Air Balloons scene ($\rho{=}24\%$). The palette-only albedo (top left) is overly smooth, while the blended albedo (top right) partially compensates via SH mixing. The per-material refined albedo (bottom left) recovers some detail but still deviates from the ground truth albedo (bottom right) in regions with complex spatially-varying textures. This scene is the primary cause of our lower albedo SSIM and LPIPS on the Synthetic4Relight dataset (Table~2), as the fine texture details exceed the representational capacity of a compact material palette.}
  \label{fig:failure_texture}
\end{figure}

\paragraph{Geometry-induced artifacts.}
When the reconstructed Gaussian geometry is inaccurate---misaligned normals, missing surfaces, or incorrect opacity coverage---the physically based shading model receives erroneous inputs. As shown in Figure~\ref{fig:failure_geometry}, the normals recovered by Ref-GS deviate substantially from the ground truth, causing the PBR model to produce incorrect shading. Shadows and ambient occlusion computed from incorrect visibility become baked into the albedo, and the residual weight estimation stage cannot fully separate these geometric artifacts from genuine material variation. In these cases, the per-Gaussian material refinement stage (Stage~4) compensates by re-learning the residual lighting effects into the per-material albedo, yielding a visually plausible final render but at the cost of reduced physical interpretability---the refined materials are no longer purely intrinsic. This behavior is consistent with other Gaussian inverse rendering methods, which similarly absorb unexplained lighting into per-primitive material parameters when the geometry or BRDF model is insufficient.

\paragraph{Complex spatially-varying textures.}
The palette representation assumes that scene materials cluster into a small number of shared prototypes. Surfaces with high-frequency texture detail---wood grain, fabric weave, weathered stone, or printed patterns---exhibit continuous spatial variation that cannot be adequately represented by soft combinations of a few palette entries. In these cases the decomposition either over-segments the scene into many small palette entries (losing the compactness advantage) or produces overly smooth material maps that lose important texture detail. The material residual field mitigates this to some extent, but its capacity is deliberately limited to prevent degenerate solutions.

\subsection{Path Traced Indirect Illumination}

\begin{figure}[t]
  \centering
  \includegraphics[width=\columnwidth]{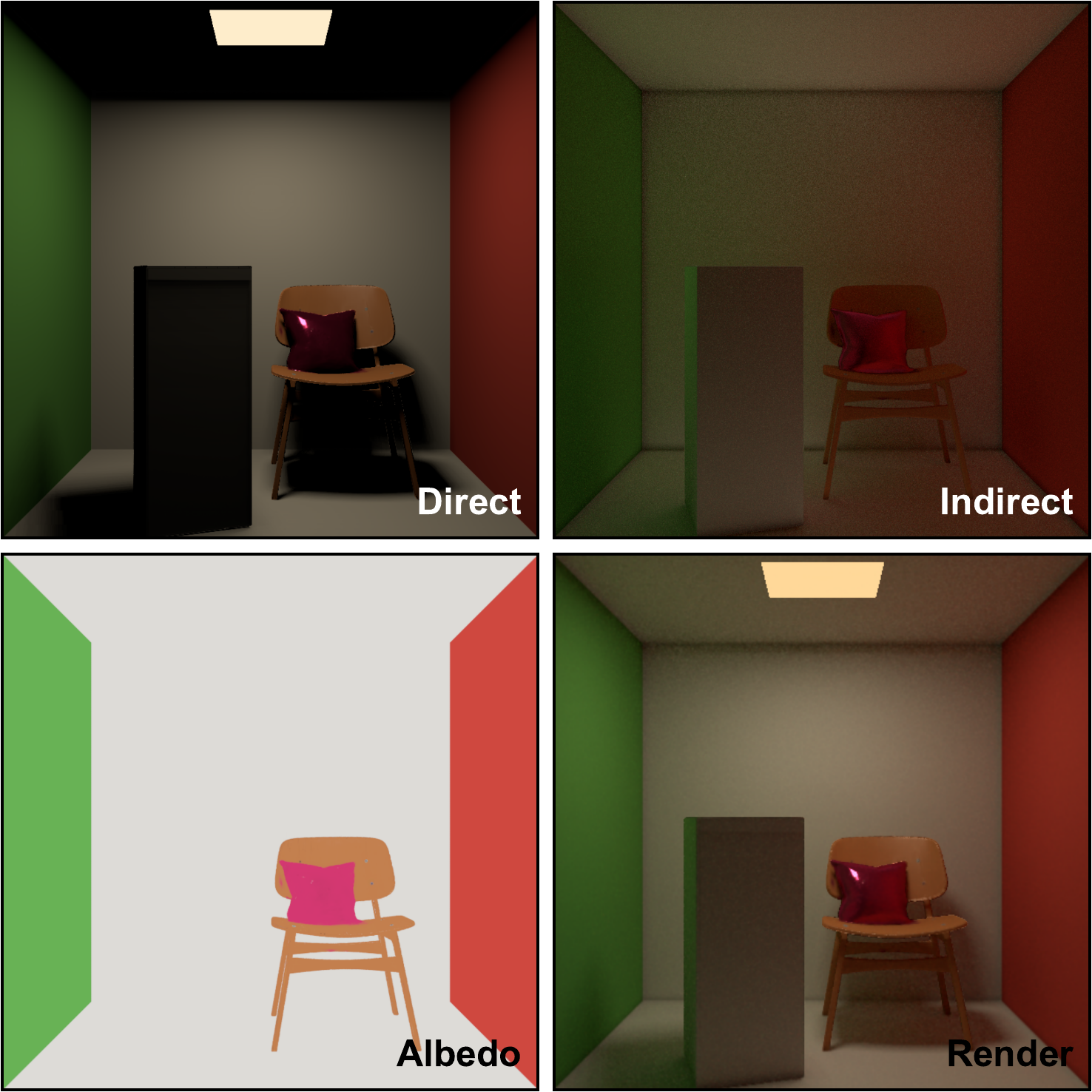}
  \caption{Path traced indirect illumination. A chair with modified palette materials is placed inside a Cornell Box constructed entirely from 2D Gaussians and rendered with our wavefront path tracer. From left to right: direct illumination, indirect illumination (one-bounce global illumination), albedo, and the combined render. Color bleeding from the red and green walls onto the chair surface is correctly captured by the path tracer.}
  \label{fig:cornell_chair}
\end{figure}

During inverse rendering, indirect illumination is approximated by the pre-trained SH radiance field (Section~4.5). Once the materials are decomposed, however, the SH cache no longer reflects the modified scene and must be replaced by proper path tracing for relighting. Figure~\ref{fig:cornell_chair} demonstrates this pipeline: we modify the palette material of a reconstructed chair and place it inside a Cornell Box constructed entirely from 2D Gaussians, then render the scene with our wavefront path tracer. The direct component captures primary illumination from the area light, while the indirect component shows one-bounce global illumination including color bleeding from the red and green walls onto the chair surface. Unlike the SH-based cache, the path tracer correctly accounts for inter-object light transport and material-dependent interreflection, confirming that the palette-based decomposition recovers physically meaningful BRDF parameters rather than merely fitting appearance.

\subsection{Albedo Comparison with Insets}

\begin{figure*}[t]
  \centering
  \includegraphics[width=0.95\textwidth]{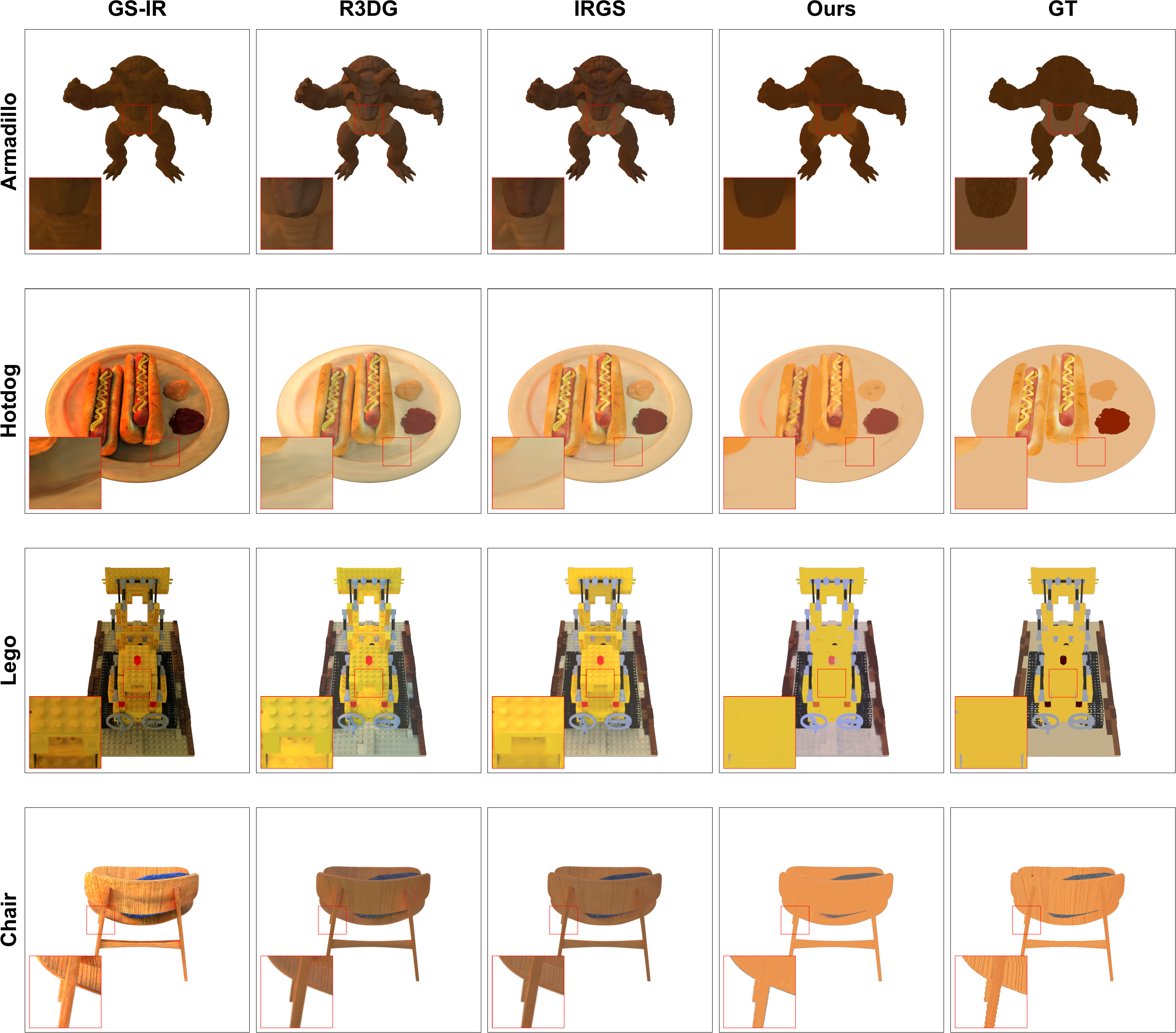}
  \caption{Zoomed-in albedo comparison with insets (see below for discussion).}
  \label{fig:albedo_comparison_inset}
\end{figure*}

Figure~\ref{fig:albedo_comparison_inset} provides a zoomed-in comparison of albedo decomposition across all scenes. The insets highlight a key advantage of the palette-based representation: because all Gaussians assigned to the same palette entry share identical BRDF parameters, the recovered albedo is spatially consistent and free from the high-frequency noise and local artifacts that characterize per-primitive methods such as GS-IR, R3DG, and IRGS. These baselines often bake residual shading, contact shadows, or visibility errors into individual albedo estimates, producing slight color variations across regions that should share the same material. In contrast, our method forces a single albedo prototype to explain all Gaussians of the same material, naturally suppressing such artifacts.

This consistency is essential for palette-level editing: modifying one palette entry produces a uniform appearance change across all associated regions, which is not guaranteed when each Gaussian carries independent material parameters. However, this compactness comes at a cost. As discussed in the failure cases (Figures~\ref{fig:failure_geometry}--\ref{fig:failure_texture}), the palette cannot capture high-frequency texture detail in regions where material appearance varies continuously at a fine scale (e.g., the Chair scene), leading to slightly over-smooth albedo estimates in such areas.